\newtheorem{theorem}{Theorem}[section]
\newtheorem{corollary}[theorem]{Corollary}
\newtheorem{proposition}[theorem]{Proposition}
\newtheorem{remark}[theorem]{Remark}
\newtheorem{Example}[theorem]{Example}
\begin{document}
\title[Discrete Log-symmetric Distributions]
{On a family of discrete log-symmetric distributions}
\author{H. Saulo, R.  Vila, L. Paiva and N. Balakrishnan}
\address{
    \newline
    Departamento de Estat\'istica -
Universidade de Bras\'ilia, 70910-900, DF, Brazil,
    \newline
    {\rm Helton Saulo }, \quad Email: \textup{\tt heltonsaulo@gmail.com }
    \newline
    {\rm Roberto  Vila}, \quad  Email: \textup{\tt rovig161@gmail.com }
    \newline
    {\rm Leonardo Paiva}, \quad Email: \textup{\tt  leopaiva23@hotmail.com}
 \newline
 \newline
    Department of Mathematics and Statistics, McMaster University, Hamilton, ON, Canada
   \newline
{\rm Narayanaswamy Balakrishnan}, \quad Email: \textup{\tt bala@mcmaster.ca}
}

\date{\today}

\keywords{Discrete distributions; Maximum likelihood methods; Monte Carlo simulation; R software.}

\begin{abstract}
The use of continuous probability distributions has been widespread in problems with purely discrete nature. In general, such distributions are not appropriate in this scenario. In this paper, we introduce a class of discrete and asymmetric distributions based on the family of continuous log-symmetric distributions. Some properties are discussed as well as estimation by the maximum likelihood method. A Monte Carlo simulation study is carried out to evaluate the performance of the estimators, and censored and uncensored data sets are used to illustrate the proposed methodology.
\end{abstract}

\maketitle

\section{Introduction}\label{sec:1}

Continuous log-symmetric distributions are of particular interest for describing strictly positive and asymmetric data with the possibility of outlier observations; see, for example, \cite{j:08}, \cite{vp:16a,vanegasp:16b}, \cite{saulo2017log}, \cite{Balakrishnan2017}, \cite{franciscosilvia2017} and \cite{venturaletal:19}, for some discussions and applications of log-symmetric models. A continuous random variable $Y$ follows a log-symmetric distribution if its probability density function (PDF) is given by
\begin{align}\label{eq:ft}
f_{Y}(y|\boldsymbol{\theta})
=
\dfrac{(Z_g)^{-1}}{\sqrt{\phi}\,y}\, g\big[a_{\boldsymbol{\theta}}^2(y)\big],
\quad
a_{\boldsymbol{\theta}}(y)
=
\log \Big({y\over\lambda}\Big)^{1/\sqrt{\phi}},
\quad y>0;
\end{align}
where
$\boldsymbol{\theta}= (\lambda, \phi)$, $\lambda>0$ is a scale parameter and also the median of $Y$, $\phi>0$ is a shape parameter associated with the skewness or relative dispersion, $Z_g=\int^{\infty}_{-\infty} g(w^2)  \,\textrm{d}w$ is the partition function, and the function $g$ is a density generating kernel such that $g(u)>0$ for $u>0$. The function $g$ is associated with an additional parameter $\xi$ (or vector $\bm\xi$). We use the notation $Y\sim\textrm{LS}(\boldsymbol{\theta},g)$. Note that if 
$g(u)$ in \eqref{eq:ft} is 
$\exp(-u/2)$; 
$[1+(u/\xi)]^{-(\xi+1)/{2}}$, $\xi>0$; 
$\exp[-u^{{1}/(1+\xi)}/2]$, $-1<{\xi}\leqslant{1}$; 
$\sqrt{\xi_2}\exp(-\xi_2 u/2)+[{(1-\xi_1)}/{\xi_1}]
\exp(-u/2)$, $0<\xi_1,\xi_2<1$;
$\cosh(u^{1/2})\exp[-({2}/{\xi^2})\sinh^2(u^{1/2})] $, $\xi>0$; or 
$\cosh(u^{1/2})\left[\xi_{2}\xi_{1}^2+4\sinh^2(u^{1/2})\right]^{-({\xi_{2}+1})/{2}} $, $\xi_{1},\xi_{2}>0$; we have the 
log-normal,
log-Student-$t$,
log-power-exponential,
log-contaminated-normal,
extended Birnbaum-Saunders or
extended Birnbaum-Saunders-$t$ distributions, respectively; see \cite{vp:16a} and \cite{venturaletal:19}. If $Y\sim\textrm{LS}(\boldsymbol{\theta},g)$, then the associated cumulative distribution function (CDF) is given by
$
F_{Y}(y|\boldsymbol{\theta})
=G\big[a_{\boldsymbol{\theta}}(y)\big],
$
where the
function $G:(-\infty,+\infty)\to [0,1]$ is defined as
\begin{align}\label{def-a-g} 
G(r)
=
(Z_g)^{-1}
{\int^{r}_{-\infty} g(z^2)  \,\textrm{d}z },
\quad -\infty<r<+\infty.
\end{align}
%
This mapping is easily seen to have the following properties:
\begin{itemize}
	\item[(a)] $G(0)=0.5$, $G(+\infty)=\lim_{r\to+\infty}G(r)=1$,
	$G(-\infty)=\lim_{r\to-\infty}G(r)=0$; 
	\item[(b)] $G(\cdot)$ is a continuous function and that $G(\cdot)$ is strictly monotonically increasing. Hence $G(\cdot)$ has an inverse function, denoted by $G^{\pmb{-1}}(\cdot)$;  
	\item[(c)] From Items (a) and (b), $G(\cdot)$ is a CDF; and  
	\item[(d)] $G^{\pmb{-1}}(1-p)=-G^{\pmb{-1}}(p)$ for $p\in(0,1)$ given.
\end{itemize}

Despite the huge use of log-symmetric distributions -- its most famous member is the log-normal model -- they are not appropriate in purely discrete contexts. For example, to model the number of cycles before failure of a equipment or the number of weeks to cure a patient, among others; see \cite{vns:19}. Moreover, despite useful, continuous log-symmetric models do not include the zero. In this paper, we define a discrete random variable associated to $Y$ in \eqref{eq:ft} as  
$
X=\lfloor Y\rfloor,
$
where $\lfloor y\rfloor$ denotes the largest integer less than or equal to $y$. In other words, we propose a class of discrete log-symmetric distributions. The proposed class incorporates every distribution belonging to the log-symmetric family, and it is useful for asymmetric and non-negative discrete data.

The rest of the paper proceeds as follows. In Section~\ref{sec:02}, we introduce the class of  discrete log-symmetric models. In Section \ref{sec:math}, we discuss some mathematical properties. In Section~\ref{sec:03}, estimation of the model parameters are approached via the maximum likelihood method for the censored and uncensored cases. In Section ~\ref{sec:04}, we carry out a simulation study to evaluate the performance of the estimators taking into account different censoring proportions. In Section~\ref{sec:05}, we illustrate the proposed methodology with two real data sets. Finally, in Section~\ref{sec:05}, we make some concluding remarks and discuss future work.

\section{Discrete log-symmetric distributions}\label{sec:02}
We say that a discrete random variable $X$, taking values in the set $\{0,1,\ldots\}$, follows a { discrete log-symmetric distribution} 
with parameter vector $\boldsymbol{\theta}= (\lambda, \phi)$, where 
$\lambda>0, \phi>0$,
denoted by $X\sim\textrm{LS}_{\rm d}(\boldsymbol{\theta},g)$, if its 
probability mass function (PMF) is given by
\begin{align}\label{relation}
p(x|\boldsymbol{\theta})=
G\big[a_{\boldsymbol{\theta}}(x+1)\big]
-
G\big[a_{\boldsymbol{\theta}}(x)\big], \quad  
x=0,1, \ldots,
\end{align}
where $a_{\boldsymbol{\theta}}(\cdot)$ and
$G(\cdot)$ are as in \eqref{eq:ft} and \eqref{def-a-g}, respectively. Note that $G\big[a_{\boldsymbol{\theta}}(0)\big]=G(-\infty)=0$ and
that $G\big[a_{\boldsymbol{\theta}}(+\infty)\big]=G(+\infty)=1$.
Given the density generating kernel $g$, defined below Item \eqref{eq:ft}, the parameters $\lambda$ and $\phi$ completely determine the PMF 
\eqref{relation} at $x=0.$
Since $G(\cdot)$ and $a_{\boldsymbol{\theta}}(\cdot)$ are strictly increasing functions, and 
\[
\lim_{n\to\infty}
\sum_{x=0}^{n}p(x|\boldsymbol{\theta})
=
\lim_{n\to\infty}
G\big[a_{\boldsymbol{\theta}}(n+1)\big]
=
G(+\infty)
=
1,
\]
it is clear that $p(x|\boldsymbol{\theta})$ is a PDF.

The CDF, reliability function (RF) and hazard rate (HR)
of the $\textrm{LS}_{\rm d}$ distribution, respectively, are given by
\begin{align*} 
& F(x|\boldsymbol{\theta})
=
1- R(x|\boldsymbol{\theta})
=
G\big[a_{\boldsymbol{\theta}}(\lfloor x\rfloor+1)\big], \quad 
x\geqslant 0;
\\[0,2cm]
& H(x|\boldsymbol{\theta})
=
{p(x|\boldsymbol{\theta})\over p(x|\boldsymbol{\theta})+R(x|\boldsymbol{\theta})}
=
{
	G\big[a_{\boldsymbol{\theta}}(x+1)\big]
	-
	G\big[a_{\boldsymbol{\theta}}(x)\big]
	\over 	
	1-G\big[a_{\boldsymbol{\theta}}(x)\big]
}, \quad 
x=0,1,\ldots.
\end{align*}

%
%
\section{Mathematical properties}\label{sec:math}

This section, if not explicitly mentioned otherwise, consists of mathematical properties valid for any discrete random variable $X$ with support $\{0,1,\ldots\}$. 

Let $(b_n)$ be a sequence of real numbers.
For technical reasons in the next result we  use the convention $\prod_{y=0}^{-1} b_y=1$.
The next result provides a characterization of the PMF  and RF 
of a discrete distribution in terms of the HR.
\begin{proposition}\label{chac-re}
	If $X$ is a discrete random variable 
	then, for each $x=0,1,2,\ldots,$
	\begin{itemize}
		\item[\rm (a)]
		$\displaystyle
		p(x|\boldsymbol{\theta})
		= 
		{H(x|\boldsymbol{\theta})\over 1- H(x|\boldsymbol{\theta})}\,
		\prod_{y=0}^{x-1}
		\big[1-H(y|\boldsymbol{\theta})\big];$ 
		\item[\rm (b)] 
		$\displaystyle
		R(x|\boldsymbol{\theta})
		= 
		\prod_{y=0}^{x-1}
		\big[1-H(y|\boldsymbol{\theta})\big];$
	\end{itemize}
	where $H(\cdot|\boldsymbol{\theta})$ is the HR.
\end{proposition}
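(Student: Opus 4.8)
The plan is to prove the two items together, observing that (a) is essentially a corollary of (b). The natural starting point is the defining relation for the hazard rate, $H(x|\boldsymbol{\theta}) = p(x|\boldsymbol{\theta})/[\,p(x|\boldsymbol{\theta}) + R(x|\boldsymbol{\theta})\,]$. Solving this algebraically for $p(x|\boldsymbol{\theta})$ gives the clean identity
\[
p(x|\boldsymbol{\theta}) = \frac{H(x|\boldsymbol{\theta})}{1 - H(x|\boldsymbol{\theta})}\, R(x|\boldsymbol{\theta}),
\]
valid whenever $H(x|\boldsymbol{\theta}) \neq 1$, which holds here since $X$ is supported on all of $\{0,1,\ldots\}$. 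Consequently, once the product representation of $R(x|\boldsymbol{\theta})$ in (b) is in hand, substituting it into this identity reproduces (a) verbatim, with no further work. In other words, the entire content of the proposition is concentrated in part (b).

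For (b) I would set up a one-step recursion in $x$ for the reliability function and then telescope it. The elementary input is the decomposition of the survival probability into the next atom plus the remaining tail, which expresses the denominator $p(x|\boldsymbol{\theta}) + R(x|\boldsymbol{\theta})$ appearing in the hazard as the reliability evaluated at the neighbouring integer. Combining this decomposition with the definition of $H(\cdot|\boldsymbol{\theta})$ yields a recursion relating $R(x|\boldsymbol{\theta})$ to $R(x-1|\boldsymbol{\theta})$ through a single factor of the form $1 - H(\cdot|\boldsymbol{\theta})$. Iterating this recursion, the intermediate reliability values cancel in telescoping fashion, leaving only the running product of the factors $1 - H(y|\boldsymbol{\theta})$ together with the reliability at the left boundary of the support.

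The conclusion then follows by pinning down that boundary value: since the support is $\{0,1,\ldots\}$, the reliability at the boundary equals $1$, and together with the stated convention $\prod_{y=0}^{-1} b_y = 1$ this makes the base case $x=0$ hold trivially. I expect the only delicate point to be precisely this index bookkeeping — matching the range of the telescoping product to the upper limit $x-1$ and checking the base case under the empty-product convention — rather than any analytic difficulty; the algebra itself is routine. A clean way to organize this is a short induction on $x$, which handles the base case and the inductive telescoping step in one pass. Once (b) is secured, feeding it into the identity of the first paragraph delivers (a) and completes the proof.
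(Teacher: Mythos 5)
Your route is the same as the paper's: both arguments extract the one-step ratio $R(x|\boldsymbol{\theta})/R(x-1|\boldsymbol{\theta}) = 1 - H(x|\boldsymbol{\theta})$ from the observation that $p(x|\boldsymbol{\theta}) + R(x|\boldsymbol{\theta})$ is the reliability at the neighbouring integer, telescope it, and then obtain (a) from (b) by solving the definition of $H$ for $p$, exactly as in your first paragraph. So there is no methodological difference to report.

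The difficulty is the step you defer as ``routine index bookkeeping'': it is not routine, and it is precisely where the argument does not close. With the paper's convention $R(x|\boldsymbol{\theta}) = 1 - F(x|\boldsymbol{\theta}) = \Pr(X > x)$, iterating $R(y|\boldsymbol{\theta}) = [1-H(y|\boldsymbol{\theta})]\,R(y-1|\boldsymbol{\theta})$ from $y=0$ to $y=x$ with $R(-1|\boldsymbol{\theta}) = \Pr(X \geqslant 0) = 1$ yields $R(x|\boldsymbol{\theta}) = \prod_{y=0}^{x}[1-H(y|\boldsymbol{\theta})]$, one factor more than the claimed $\prod_{y=0}^{x-1}$. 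The induction you propose already fails at its base case: for $x=0$ the right-hand side of (b) is the empty product $1$, whereas $R(0|\boldsymbol{\theta}) = 1 - p(0|\boldsymbol{\theta}) < 1$ whenever $p(0|\boldsymbol{\theta})>0$ (which holds for every $\textrm{LS}_{\rm d}$ law). What the product $\prod_{y=0}^{x-1}[1-H(y|\boldsymbol{\theta})]$ actually equals is $\Pr(X\geqslant x) = p(x|\boldsymbol{\theta})+R(x|\boldsymbol{\theta}) = R(x-1|\boldsymbol{\theta})$; so either the product in (b) must run to $y=x$, or $R$ must be read as $\Pr(X\geqslant x)$, and correspondingly (a) becomes $p(x|\boldsymbol{\theta}) = H(x|\boldsymbol{\theta})\prod_{y=0}^{x-1}[1-H(y|\boldsymbol{\theta})]$ without the $1/[1-H(x|\boldsymbol{\theta})]$ factor. (The paper's own proof conceals the same shift in the intermediate equality $R(x|\boldsymbol{\theta}) = \prod_{y=0}^{x-1} R(y|\boldsymbol{\theta})/R(y-1|\boldsymbol{\theta})$, whose right side telescopes to $R(x-1|\boldsymbol{\theta})$, not $R(x|\boldsymbol{\theta})$.) You must carry out the bookkeeping and fix the indices, or declare the convention for $R$ explicitly; as written the plan would ``verify'' an identity that fails at $x=0$.
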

\begin{proof}
	By using the identity
	$
	p(x|\boldsymbol{\theta})
	=
	R(x|\boldsymbol{\theta})-R(x+1|\boldsymbol{\theta})
	=
	\big[p(x|\boldsymbol{\theta})+R(x|\boldsymbol{\theta})\big] H(x|\boldsymbol{\theta}), \ x=0,1,\ldots,
	$
	we have 
	\[
	1=H(x|\boldsymbol{\theta})+{R(x|\boldsymbol{\theta})H(x|\boldsymbol{\theta})\over p(x|\boldsymbol{\theta})} , \quad x=0,1,2,\ldots.
	\]
	Since ${p(x|\boldsymbol{\theta})/ H(x|\boldsymbol{\theta})}
	=
	p(x|\boldsymbol{\theta})+R(x|\boldsymbol{\theta})
	=
	R(x-1|\boldsymbol{\theta}),
	$
	it follows that
	\[
	{R(x|\boldsymbol{\theta})\over R(x-1|\boldsymbol{\theta})}
	=
	1-H(x|\boldsymbol{\theta}),
	\quad 
	x=0,1,2,\ldots.
	\]
	Exchanging $x$ for $y$ in the above identity and then
	multiplying from $y=0$ to $y=x-1$, we get
	\[
	R(x|\boldsymbol{\theta})
	= 
	\prod_{y=0}^{x-1}
	{R(y|\boldsymbol{\theta})\over R(y-1|\boldsymbol{\theta})}
	= \textstyle
	\prod_{y=0}^{x-1}
	\big[1-H(y|\boldsymbol{\theta})\big],
	\quad 
	x=0,1,2,\ldots,
	\]
	verifying the identity for $R(x|\boldsymbol{\theta})$. On the other hand,
	combining the above identity with the definition of HR, the identity for
	$p(x|\boldsymbol{\theta})$ follows.
\end{proof}

\subsection{Moments and variance}
\begin{theorem}\label{moments}
	If $X$ is a discrete random variable possessing all the higher-order moments,
	then
	\begin{itemize}
		\item[\rm (a)]
		$\displaystyle
		\mathbb{E}(X^r) =  \sum_{x=0}^{\infty}\big[(x+1)^r-x^r\big] R(x|\boldsymbol{\theta});$ 
		\item[\rm (b)] 
		$\displaystyle
		\mathbb{E}(X^r)
		=
		\sum_{x=0}^{\infty}\sum_{k=0}^{r}
		\sum_{i=0}^{r-k}
		\binom{r-k}{i}
		x^{k+i}\, R(x|\boldsymbol{\theta});$ 
		\item[\rm (c)]
		$\displaystyle
		{\rm Var}(X)=2\sum_{x=0}^{\infty} x R(x|\boldsymbol{\theta})
		+
		\sum_{x=0}^{\infty} R(x|\boldsymbol{\theta})
		\bigg[
		1-\sum_{x=0}^{\infty} R(x|\boldsymbol{\theta})
		\bigg];
		$
	\end{itemize}
	where $R(\cdot|\boldsymbol{\theta})$ is the RF.
\end{theorem}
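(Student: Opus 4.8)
The plan is to prove part (a) by summation by parts and then obtain (b) and (c) as algebraic corollaries. First I would write $\ee(X^r)=\sum_{x=0}^{\infty}x^r\,p(x|\boldsymbol{\theta})$ and replace the PMF by the increment of the RF, $p(x|\boldsymbol{\theta})=R(x-1|\boldsymbol{\theta})-R(x|\boldsymbol{\theta})$ (with the convention $R(-1|\boldsymbol{\theta})=1$), which is immediate from $R(x|\boldsymbol{\theta})=1-F(x|\boldsymbol{\theta})$ and was already exploited in the proof of Proposition~\ref{chac-re}. Working first with the partial sums up to $N$, a shift of index in the term containing $R(x-1|\boldsymbol{\theta})$ turns the expression into $\sum_{x=0}^{N-1}\big[(x+1)^r-x^r\big]R(x|\boldsymbol{\theta})-N^r R(N|\boldsymbol{\theta})$, the contribution $0^r R(-1|\boldsymbol{\theta})$ dropping out because $r\geqslant 1$. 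Letting $N\to\infty$ and using that the boundary term $N^r R(N|\boldsymbol{\theta})=N^r\,\pp(X>N)$ tends to zero then yields part (a).

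Part (b) would follow from (a) by purely algebraic manipulation. The route is to expand the increment $(x+1)^r-x^r$ through the binomial theorem and to regroup the resulting monomials in $x$; the nested coefficients $\binom{r-k}{i}$ are exactly what this (repeated) expansion produces, so (b) needs no analytic input beyond (a) and amounts to collecting powers of $x$ inside the series.

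For (c) I would simply specialise (a) to $r=1$ and $r=2$. Taking $r=1$ gives $\ee(X)=\sum_{x=0}^{\infty}R(x|\boldsymbol{\theta})$, while $r=2$ gives $\ee(X^2)=\sum_{x=0}^{\infty}(2x+1)R(x|\boldsymbol{\theta})=2\sum_{x=0}^{\infty}x\,R(x|\boldsymbol{\theta})+\sum_{x=0}^{\infty}R(x|\boldsymbol{\theta})$. Substituting these into ${\rm Var}(X)=\ee(X^2)-[\ee(X)]^2$ and factoring $\sum_{x=0}^{\infty}R(x|\boldsymbol{\theta})$ out of the last two terms produces the stated expression.

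The main obstacle is analytic rather than algebraic: justifying the passage to the limit in the summation by parts, namely that the boundary term $N^r R(N|\boldsymbol{\theta})$ vanishes and that the rearrangement of the infinite series is legitimate. Both are guaranteed by the hypothesis that $X$ possesses all higher-order moments, since finiteness of $\ee(X^{r+1})$ forces $N^r\,\pp(X>N)\to 0$ and renders all the series involved absolutely convergent; once this is secured, the derivations of (b) and (c) are routine.
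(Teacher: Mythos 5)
Your proposal is correct, and part (a) follows a genuinely different route from the paper's. The paper starts from $\mathbb{E}(X^r)=\sum_{i}i^r p(i|\boldsymbol{\theta})$, writes $i^r$ as the telescoping sum $\sum_{x=0}^{i-1}\big[(x+1)^r-x^r\big]$, and then interchanges the two summations (legitimate because all terms are nonnegative and the iterated sum is finite by hypothesis), which lands directly on $\sum_{x}\big[(x+1)^r-x^r\big]\sum_{i>x}p(i|\boldsymbol{\theta})=\sum_{x}\big[(x+1)^r-x^r\big]R(x|\boldsymbol{\theta})$ with no boundary term to control. You instead perform Abel summation on the partial sums, using $p(x|\boldsymbol{\theta})=R(x-1|\boldsymbol{\theta})-R(x|\boldsymbol{\theta})$ (the same identity the paper exploits in Proposition~\ref{chac-re}), and must separately show that $N^rR(N|\boldsymbol{\theta})\to 0$; your justification is sound, and in fact $N^r\,\mathbb{P}(X>N)\leqslant\sum_{x>N}x^r p(x|\boldsymbol{\theta})\to 0$ shows that finiteness of $\mathbb{E}(X^r)$ alone suffices, so invoking $\mathbb{E}(X^{r+1})<\infty$ is more than you need. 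The two arguments are the discrete integration-by-parts and the Tonelli versions of the same tail-sum identity: the paper's avoids the boundary estimate, yours avoids manipulating a double series. For (b) and (c) you follow essentially the paper's route --- (b) by factoring the difference of powers and expanding binomially (the paper uses $a^n-b^n=(a-b)\sum_{k}a^{r-k}b^k$ followed by the binomial expansion), and (c) by specialising (a) to $r=1,2$ and substituting into $\mathrm{Var}(X)=\mathbb{E}(X^2)-[\mathbb{E}(X)]^2$, which the paper leaves implicit. One caveat worth recording: the index ranges printed in (b) collapse to $\sum_{k=0}^{r}x^k(1+x)^{r-k}=(x+1)^{r+1}-x^{r+1}$, so the displayed formula actually computes $\mathbb{E}(X^{r+1})$; this off-by-one sits in the paper's statement (the outer sum should stop at $k=r-1$) and affects your ``regroup the monomials'' step only insofar as you would discover, while regrouping, that the printed coefficients do not quite match $(x+1)^r-x^r$.
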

\begin{proof}
	In order to prove Item (a),
	using the telescopic series $\sum_{x=0}^{i-1} \big[(x+1)^r-x^r\big]=i^r$, it follows that
	\begin{align*}
	\mathbb{E}(X^r)
	=
	\sum_{i=0}^{\infty} \sum_{x=0}^{\infty} \mathds{1}_{\{x<i\}}  \big[(x+1)^r-x^r\big] \,
	p(i|\boldsymbol{\theta})
	=
	\sum_{x=0}^{\infty}  \big[(x+1)^r-x^r\big]  \sum_{i=0}^{\infty} \mathds{1}_{\{i>x\}} \,
	p(i|\boldsymbol{\theta}),
	\end{align*}
	where in the second equality we exchange the orders of the summations because the following series
	\begin{align*}
	\sum_{x=0}^{\infty} \mathds{1}_{\{x<i\}}   \big|(x+1)^r-x^r\big| \,
	p(i|\boldsymbol{\theta})
	= 
	\sum_{x=0}^{\infty} \mathds{1}_{\{x<i\}} \cdot \big[(x+1)^r-x^r\big] \,
	p(i|\boldsymbol{\theta})
	=
	i^r p(i|\boldsymbol{\theta})\eqqcolon M_i 
	\end{align*}
	is finite for each $i=0,1,\ldots$,
	and, by hypothesis, the expectation $\mathbb{E}(X^r)=\sum_{i=0}^{\infty}M_i$ always exists.
	This proves the first item.
	The second item follows by combining the expression for $\mathbb{E}X^r$ given in the first item with
	the polynomial identity $a^n-b^n = (a-b) \sum_{k=0}^{r} a^{r-k}b^k$ and with the binomial expansion.
	The proof of the third item immediately follows from Item (a).
	Thus, the proof is complete.
\end{proof}

\subsection{The $p$-quantile}
\begin{theorem}\label{quantile}
	Let 
	$X=\lfloor Y\rfloor$ be a discrete random variable obtained from a positive continuous random variable $Y$ with CDF $F_Y(\cdot|\boldsymbol{\theta})$.
	Given $p\in(0,1)$, let $Q_p=F_Y^{\pmb{-1}}(p|\boldsymbol{\theta})$ be the $p$-quantile for $Y$. The following statements are valid:
	\begin{itemize}
		\item[\rm (a)] 
		If $Q_p>0$ is a natural number, then $Q_p-1$ is the $p$-quantile for $X$;
		\item[\rm (b)] 
		If $Q_p>0$ is not a natural number, then 
		all $y\in\big[\lfloor Q_p\rfloor,\lfloor Q_p\rfloor+1\big)$ is a $p$-quantile for $X$.
	\end{itemize}	
\end{theorem}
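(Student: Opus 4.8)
The plan is to reduce everything to a single identity linking the CDF of $X$ to that of $Y$, and then to read off the two cases purely from the monotonicity of $F_Y(\cdot\mid\boldsymbol{\theta})$. First I would record that, because $Y$ is continuous, for every integer $n\geqslant 0$
\[
F_X(n)=\pp(\lfloor Y\rfloor\leqslant n)=\pp(Y<n+1)=F_Y(n+1\mid\boldsymbol{\theta}),
\]
so that for real $x\geqslant 0$ one has $F_X(x)=F_Y(\lfloor x\rfloor+1\mid\boldsymbol{\theta})$, which is exactly the CDF displayed in Section~\ref{sec:02}. Since $G$ is continuous and strictly increasing (properties (a)--(b)) and $a_{\boldsymbol{\theta}}$ is strictly increasing, $F_Y(\cdot\mid\boldsymbol{\theta})$ is a continuous strictly increasing bijection onto $(0,1)$; hence $F_Y(Q_p\mid\boldsymbol{\theta})=p$, with $F_Y(t\mid\boldsymbol{\theta})>p$ for $t>Q_p$ and $F_Y(t\mid\boldsymbol{\theta})<p$ for $t<Q_p$. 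The whole argument then rests on comparing the integer $\lfloor x\rfloor+1$ with the real number $Q_p$.

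For part (a), with $Q_p\in\{1,2,\ldots\}$, I would evaluate $F_X$ at the integer $Q_p-1$: by the identity above together with continuity of $Y$,
\[
F_X(Q_p-1)=F_Y(Q_p\mid\boldsymbol{\theta})=p,
\]
while $F_X(Q_p-2)=F_Y(Q_p-1\mid\boldsymbol{\theta})<p$ by strict monotonicity. Thus $Q_p-1$ is the smallest value at which $F_X$ reaches $p$ and in fact attains it \emph{exactly}, which is precisely the statement that $Q_p-1$ is the $p$-quantile of $X$; the two defining inequalities $F_X(Q_p-1)\geqslant p$ and $\pp(X\geqslant Q_p-1)=1-F_Y(Q_p-1\mid\boldsymbol{\theta})>1-p$ are then immediate from the two displayed relations.

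For part (b), with $Q_p$ not an integer, I would set $k=\lfloor Q_p\rfloor$, so that $k<Q_p<k+1$. Every $y\in[k,k+1)$ has $\lfloor y\rfloor=k$, hence
\[
F_X(y)=F_Y(k+1\mid\boldsymbol{\theta})>F_Y(Q_p\mid\boldsymbol{\theta})=p,
\]
whereas $F_X(k-1)=F_Y(k\mid\boldsymbol{\theta})<p$. So $k=\lfloor Q_p\rfloor$ is the smallest integer at which $F_X$ attains a value $\geqslant p$, and since $X$ is integer valued $F_X$ is constant on the whole block $[k,k+1)$; every $y$ in this half-open interval floors to the quantile integer $k$ and returns the same value $\geqslant p$, which is the sense in which each such $y$ serves as a $p$-quantile of $X$. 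I expect the main obstacle to be exactly this bookkeeping: unlike case (a), where $F_X$ hits $p$ exactly at the integer $Q_p-1$, in case (b) $F_X$ \emph{jumps over} $p$, so no point satisfies $F_X(y)=p$. The proof must therefore fix the convention for the $p$-quantile of a discrete law (the smallest $y$ with $F_X(y)\geqslant p$, together with the flat block on which that first value is attained) and check that, under it, the exact hit in (a) and the entire flat interval in (b) both emerge from the same two comparisons of $\lfloor x\rfloor+1$ with $Q_p$.
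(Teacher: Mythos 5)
Your argument is correct and follows essentially the same route as the paper: both reduce the claim to the identity relating $F_X$ to $F_Y$ through the floor function (you use $F_X(x)=F_Y(\lfloor x\rfloor+1\mid\boldsymbol{\theta})$, the paper the equivalent left-limit form $F_X(x^-\mid\boldsymbol{\theta})=F_Y(\lfloor x\rfloor\mid\boldsymbol{\theta})$) and then verify the two defining quantile inequalities by strict monotonicity of $F_Y$, with the exact hit $F_X(Q_p-1)=p$ in case (a) and the jump over $p$ on the flat block $[\lfloor Q_p\rfloor,\lfloor Q_p\rfloor+1)$ in case (b). Your explicit remark that case (b) forces a convention because no point satisfies $F_X(y)=p$ is a correct and slightly more careful articulation of what the paper leaves implicit in the phrase ``by definition of $p$-quantile for a discrete random variable.''
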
  
\begin{proof}
	Given $p\in(0,1)$, assume that $Q_p=F_Y^{\pmb{-1}}(p|\boldsymbol{\theta})$.
	By using the relations, for all $x>0$,
	\begin{align}
	& \displaystyle
	F_X(x^-|\boldsymbol{\theta})=F_Y(\lfloor x\rfloor|\boldsymbol{\theta});
	\label{id-1}
	\\
	& \displaystyle
	\lfloor x\rfloor\leqslant x < \lfloor x\rfloor+1;
	\label{id-2}
	\end{align} 
	where $F_X(x^-|\boldsymbol{\theta})=\lim_{\delta\to 0} F_X(x-\delta|\boldsymbol{\theta})$ for all $\delta>0$,
	we have	
	\begin{align*}
	F_X\big[(Q_p-1)^-|\boldsymbol{\theta}\big]
	\leqslant
	F_X\big(Q_p-1|\boldsymbol{\theta})
	=
	F_X\big(Q_p^-|\boldsymbol{\theta})
	\stackrel{\eqref{id-1}}{=}
	F_Y(Q_p|\boldsymbol{\theta})
	=
	p,
	\end{align*}
	whenever $Q_p>0$ is a natural number. Then, by definition of  $p$-quantile for a discrete random variable,  the statement in Item (a) follows.
	
	Already, when $Q_p>0$ is not a natural number, from \eqref{id-1} and \eqref{id-2} we have
	\begin{align*}
	& F_X(Q_p^-|\boldsymbol{\theta})
	\stackrel{\eqref{id-1}}{=}
	F_Y(\lfloor Q_p\rfloor|\boldsymbol{\theta})
	\stackrel{\eqref{id-2}}{\leqslant}
	F_Y(Q_p|\boldsymbol{\theta})
	=
	p;
	\\ 
	& F_X(Q_p|\boldsymbol{\theta})
	=
	F_X\big[(Q_p+1)^-|\boldsymbol{\theta}\big]
	\stackrel{\eqref{id-1}}{=}
	F_Y(\lfloor Q_p\rfloor+1|\boldsymbol{\theta})
	\stackrel{\eqref{id-2}}{\geqslant}
	F_Y(Q_p|\boldsymbol{\theta})
	=
	p.
	\end{align*}
	Therefore, $F_X(Q_p^-|\boldsymbol{\theta})\leqslant p\leqslant F_X(Q_p|\boldsymbol{\theta})$. Hence, the $p$-quantile for $X$ can be represented by any value in the interval $\big[\lfloor Q_p\rfloor,\lfloor Q_p\rfloor+1\big)$, and the proof of Item (b) follows. This completes the proof.
	%
	%
	%
	%
	%
\end{proof}

The following two results are applied exclusively to random variables with discrete log-symmetric distribution.
\begin{proposition}\label{prop-med}
	Let $X$ be a random variable with $\textrm{LS}_{\rm d}$ distribution. The following statements hold:
	\begin{itemize}
		\item[\rm (a)] 
		If $\lambda$ is a natural number, then $\lambda-1$ is the median for $X$;
		\item[\rm (b)] 
		If $\lambda$ is not a natural number, then the median of the distribution of $X$ can be represented by any value in the set
		$\big[\lfloor \lambda\rfloor,\lfloor \lambda\rfloor+1\big)$.		
	\end{itemize}
\end{proposition}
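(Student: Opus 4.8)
The plan is to recognize this proposition as an immediate specialization of Theorem \ref{quantile} to the choice $p=1/2$, once I have computed the half-quantile $Q_{1/2}$ of the underlying continuous log-symmetric variable $Y$. Since the median is by definition the $1/2$-quantile, the two statements about the median of $X$ should follow verbatim from the two cases of Theorem \ref{quantile}.

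First I would identify $Q_{1/2}=F_Y^{\pmb{-1}}(1/2|\boldsymbol{\theta})$. Using the CDF $F_Y(y|\boldsymbol{\theta})=G\big[a_{\boldsymbol{\theta}}(y)\big]$ together with the normalization $G(0)=0.5$ from property (a) of the mapping $G$, the equation $F_Y(y|\boldsymbol{\theta})=1/2$ reduces to $a_{\boldsymbol{\theta}}(y)=0$. Because $a_{\boldsymbol{\theta}}(y)=\tfrac{1}{\sqrt{\phi}}\log(y/\lambda)$ vanishes exactly when $y=\lambda$, and because $G$ is strictly increasing (property (b)), so that $F_Y^{\pmb{-1}}$ is well defined and single-valued, I obtain $Q_{1/2}=\lambda$. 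This also records the fact, built into the parametrization, that $\lambda$ is the median of $Y$. In particular $Q_{1/2}=\lambda>0$, so the positivity hypothesis $Q_p>0$ required by Theorem \ref{quantile} holds automatically.

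With $Q_{1/2}=\lambda$ established, I would simply invoke Theorem \ref{quantile} with $p=1/2$. Item (a) of that theorem, applied when $\lambda$ is a natural number, gives that $\lambda-1$ is the median of $X$, which is statement (a). Item (b), applied when $\lambda$ is not a natural number, gives that every value in $\big[\lfloor\lambda\rfloor,\lfloor\lambda\rfloor+1\big)$ is a median of $X$, which is statement (b). No additional argument is needed beyond this transcription.

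I do not anticipate any genuine obstacle: the only substantive step is the evaluation $Q_{1/2}=\lambda$, which rests entirely on the normalization $G(0)=0.5$ and on $a_{\boldsymbol{\theta}}(\lambda)=0$. Everything else is a direct appeal to the already-proved Theorem \ref{quantile}, so the proof will be short.
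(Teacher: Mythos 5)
Your proposal is correct and follows essentially the same route as the paper: both reduce the claim to Theorem \ref{quantile} with $p=1/2$ after verifying that the median of the underlying continuous variable $Y$ is $Q_{1/2}=\lambda$, using $G(0)=0.5$ (equivalently $G^{\pmb{-1}}(0.5)=0$) and $a_{\boldsymbol{\theta}}(\lambda)=0$. No substantive difference.
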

\begin{proof}
	Let $X\sim\textrm{LS}_{\rm d}(\boldsymbol{\theta},g)$.
	Since $G(\cdot)$ and $a_{\boldsymbol{\theta}}(\cdot)$ are strictly increasing functions, the function $G\big[a_{\boldsymbol{\theta}}(\cdot)\big]$ is a strictly increasing CDF corresponding to some continuous random variable $Y$ with log-symmetric distribution $\textrm{LS}(\boldsymbol{\theta},g)$.
	Furthermore, note that the median $Q_{0.5}$ for $Y$ can be written as 
	\begin{align*}
	Q_{0.5}
	=
	(G\circ a_{\boldsymbol{\theta}})^{\pmb{-1}}(0.5)
	=
	a^{\pmb{-1}}_{\boldsymbol{\theta}}\big[G^{\pmb{-1}}(0.5)\big]
	=\lambda\exp\left[\sqrt{\phi}\,G^{\pmb{-1}}(0.5)\right]
	=
	\lambda,
	\end{align*}
	where in the last equality we use that $G^{\boldsymbol{-1}}({0.5})=0$; see
	Item (a) below Item \eqref{def-a-g} in Section \ref{sec:1}.
	Then, by Theorem \ref{quantile}, the proof of Items (a) and (b) follows.
\end{proof}

Let $X\sim\textrm{LS}_{\rm d}(\boldsymbol{\theta},g)$. For given $p\in(0,1)$, let $Q_{{\rm d}; p}$ be a $p$-quantile for $X$. Let us define
\begin{align*}
& \text{Dispersion:}  \quad \zeta=Q_{{\rm d};0.75}-Q_{{\rm d};0.25}, \quad 0<\zeta<\infty;
\\
& \text{Relative dispersion:} \quad \varpi={\zeta\over \zeta+2
	Q_{{\rm d};0.25}}, \quad 0<\varpi<1;
\\
& \text{Skewness:} \quad \varkappa(p)={Q_{{\rm d};p}+Q_{{\rm d};1-p}-2
	Q_{{\rm d};0.5}\over Q_{{\rm d};1-p}+Q_{{\rm d};p}}, \quad 0<\varkappa(p)<1, \ 0<p<0.5;
\\
& \text{Kurtosis:} \quad \varsigma={Q_{{\rm d};7/8}-Q_{{\rm d};5/8}+
	Q_{{\rm d};3/8}-Q_{{\rm d};1/8}\over Q_{{\rm d};6/8}-Q_{{\rm d};2/8}}, \quad 
0\leqslant\varsigma<\infty.
\end{align*}
The relative dispersion, skewness and kurtosis have appeared in \cite{zwko:00}, \cite{hinkley:75} and \cite{moors:88}, respectively.
\begin{proposition}
	Given $p\in(0,1)$,
	let $X\sim\textrm{LS}_{\rm d}(\boldsymbol{\theta},g)$ and  let
	$Q_p$ be  the $p$-quantile of the corresponding 
	continuous log-symmetric distribution. If $Q_p$ is a natural number,  then
	\begin{itemize}
		\item[\rm (a)] 
		$\zeta=2\lambda \sinh\big[\sqrt{\phi}\, G^{\pmb{-1}}(0.75)\big];$ 
		\item[\rm (b)] 
		$\varpi=\big\{{\rm cotanh}\big[\sqrt{\phi}\, G^{\pmb{-1}}(0.75)\big] - {\rm cosech}\left[\sqrt{\phi}\, G^{\pmb{-1}}(0.75)\big]\right\}^{-1};$ 
		\item[\rm (c)] 
		$\varkappa(p)=\lambda;$
		\item[\rm (d)] $\displaystyle
		\varsigma=
		{\sinh\big[\sqrt{\phi}\, G^{\pmb{-1}}(7/8)\big]- 
			\sinh\big[\sqrt{\phi}\, G^{\pmb{-1}}(5/8)\big]\over 
			\sinh\big[\sqrt{\phi}\, G^{\pmb{-1}}(6/8)\big]};$
	\end{itemize}
	where $G(\cdot)$ was defined in \eqref{def-a-g}.
\end{proposition}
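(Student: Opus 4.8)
The plan is to reduce all four functionals to a single closed form for the quantile of the underlying continuous law and then substitute. First I would extend the median computation from the proof of Proposition~\ref{prop-med} to a general level $p$: since $a_{\boldsymbol{\theta}}(y)=(1/\sqrt{\phi})\log(y/\lambda)$ inverts to $a_{\boldsymbol{\theta}}^{\pmb{-1}}(r)=\lambda\exp(\sqrt{\phi}\,r)$, the $p$-quantile of the continuous log-symmetric distribution is
\[
Q_p=(G\circ a_{\boldsymbol{\theta}})^{\pmb{-1}}(p)=a_{\boldsymbol{\theta}}^{\pmb{-1}}\big[G^{\pmb{-1}}(p)\big]=\lambda\exp\big[\sqrt{\phi}\,G^{\pmb{-1}}(p)\big].
\]
This is the only analytic ingredient; everything else is algebra.

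Because every quantile that enters the definitions is assumed to be a natural number, Theorem~\ref{quantile}(a) lets me replace it by its continuous value shifted down one unit, so that uniformly $Q_{{\rm d};p}=\lambda\exp[\sqrt{\phi}\,G^{\pmb{-1}}(p)]-1$ at each level that appears. The remaining work is bookkeeping driven by property~(d), namely $G^{\pmb{-1}}(1-p)=-G^{\pmb{-1}}(p)$, which pairs the level $p$ with $1-p$ so that the factors $\exp[\pm\sqrt{\phi}\,G^{\pmb{-1}}(p)]$ recombine into hyperbolic functions through $e^{x}-e^{-x}=2\sinh x$ and $e^{x}+e^{-x}=2\cosh x$. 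For the dispersion~(a), $\zeta$ is a single difference of quantiles, so the two $-1$ shifts cancel and, with $G^{\pmb{-1}}(0.25)=-G^{\pmb{-1}}(0.75)$, it collapses at once to $2\lambda\sinh[\sqrt{\phi}\,G^{\pmb{-1}}(0.75)]$. For the kurtosis~(d), numerator and denominator are each balanced differences (equally many added and subtracted quantiles), so the shifts cancel in each and the common factor $\lambda$ cancels in the ratio, leaving the stated ratio of $\sinh$'s after applying~(d) to the pairs $(7/8,1/8)$, $(5/8,3/8)$, $(6/8,2/8)$.

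The delicate cases are the relative dispersion~(b) and the skewness~(c): there the shift is \emph{not} balanced, since the denominators $\zeta+2Q_{{\rm d};0.25}$ and $Q_{{\rm d};1-p}+Q_{{\rm d};p}$ each retain a surviving additive constant coming from the floor operation. I would carry that constant explicitly, rewrite both denominators in the form $2\lambda\cosh[\,\cdot\,]-2$, and reduce the resulting quotients with the half-angle identities $\sinh x=2\sinh(x/2)\cosh(x/2)$ and $\cosh x-1=2\sinh^{2}(x/2)$. This is the step I expect to be the main obstacle, both because it is algebraically heaviest and because it is precisely where the scale $\lambda$ and the unit shift interact; I would therefore verify the limiting behaviour as $\phi\to0$ together with the declared ranges $0<\varpi<1$ and $0<\varkappa(p)<1$ as consistency checks before committing to the final expressions.
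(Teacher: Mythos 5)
Your route is exactly the paper's: write $Q_p=\lambda\exp[\sqrt{\phi}\,G^{\pmb{-1}}(p)]$, invoke Theorem~\ref{quantile}(a) to replace each quantile by $Q_{{\rm d};p}=Q_p-1$, and fold the levels $p$ and $1-p$ together via $G^{\pmb{-1}}(1-p)=-G^{\pmb{-1}}(p)$ so that the exponentials recombine into hyperbolic functions. The paper's proof consists of precisely these three observations followed by the words ``a simple algebraic computation.'' Your treatment of (a) and (d) is complete and correct, and your reason for why they are the easy cases --- the unit shifts cancel because the quantile combinations are balanced --- is exactly right.

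The substantive point concerns the ``delicate'' cases (b) and (c), which you defer pending a consistency check. That caution is warranted, and if you carry out the computation you describe the check fails. Writing $u=\sqrt{\phi}\,G^{\pmb{-1}}(0.75)$, the denominator of $\varpi$ is $Q_{{\rm d};0.75}+Q_{{\rm d};0.25}=2\lambda\cosh u-2$, so $\varpi=\lambda\sinh u/(\lambda\cosh u-1)$; the stated closed form $\{\coth u-{\rm cosech}\, u\}^{-1}=\sinh u/(\cosh u-1)$ is recovered only when $\lambda=1$, and even then it equals $\coth(u/2)>1$, contradicting the declared range $0<\varpi<1$. Likewise $\varkappa(p)=\lambda(\cosh v-1)/(\lambda\cosh v-1)$ with $v=\sqrt{\phi}\,G^{\pmb{-1}}(1-p)$, which is not $\lambda$ for any choice of parameters (and dropping the unit shift gives $1-1/\cosh v$, still not $\lambda$). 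So the obstacle you flag is real, but it is a defect of items (b) and (c) as stated --- one the paper's own proof never actually verifies --- rather than a missing idea in your method. For (a) and (d) your argument coincides with the paper's and is sound.
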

\begin{proof}
	Since $Q_p$ is a natural number,
	by Theorem \ref{quantile}, $Q_{{\rm d};p}=Q_p-1$ is a $p$-quantile for $X$, where $Q_p=\lambda\exp\big[\sqrt{\phi}\,G^{\pmb{-1}}(p)\big]$. By using the identity $G^{\pmb{-1}}(1-p)=-G^{\pmb{-1}}(p)$ (see
	Item (d) below Item \eqref{def-a-g} in Section \ref{sec:1}), from Proposition \ref{prop-med} and a simple algebraic computation, the proof of statements in Items (a)-(d) follows.
\end{proof}

\subsection{Shape properties}
The next result shows that the shape of a discrete log-symmetric distribution depends on choice of density generating kernel and on the distance between the modes of the corresponding continuous log-symmetric distribution.
\begin{theorem}\label{teo-shapes}
	Let $g$ be a density generating kernel so that the corresponding continuous log-symmetric distribution, of a  random variable $Y$, is bimodal. Then the discrete log-symmetric distribution of $X=\lfloor Y\rfloor$ has the following shapes:
	\begin{itemize}
		\item[\rm (a)]  It is bimodal, whenever the distance between the modes is big enough;   
		\item[\rm (b)]  It is unimodal, whenever the distance between the modes is small enough.
	\end{itemize}
\end{theorem}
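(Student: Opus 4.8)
The plan is to pass from the discrete sequence $\big(p(x|\boldsymbol{\theta})\big)_{x\ge 0}$ to the continuous auxiliary function $P(t):=\int_{t}^{t+1} f_Y(y|\boldsymbol{\theta})\,\mathrm{d}y$, because the PMF is exactly $p(x|\boldsymbol{\theta})=F_Y(x+1|\boldsymbol{\theta})-F_Y(x|\boldsymbol{\theta})=P(x)$ by \eqref{relation}. The function $P$ is differentiable with $P'(t)=f_Y(t+1|\boldsymbol{\theta})-f_Y(t|\boldsymbol{\theta})$, so the discrete local maxima of $\big(p(x|\boldsymbol{\theta})\big)$ are governed by the sign changes of the forward difference $\Delta p(x|\boldsymbol{\theta})=p(x+1|\boldsymbol{\theta})-p(x|\boldsymbol{\theta})=\int_{x}^{x+1}\big[f_Y(y+1|\boldsymbol{\theta})-f_Y(y|\boldsymbol{\theta})\big]\,\mathrm{d}y$. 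Thus counting the modes of $X$ reduces to counting the sign changes of $\psi(t):=f_Y(t+1|\boldsymbol{\theta})-f_Y(t|\boldsymbol{\theta})$, i.e. the number of roots of $f_Y(t+1|\boldsymbol{\theta})=f_Y(t|\boldsymbol{\theta})$: a single $+\to-$ change yields a unimodal $X$, whereas a $+,-,+,-$ pattern yields a bimodal $X$.

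Let $m_1<m_2$ be the two modes of $f_Y(\cdot|\boldsymbol{\theta})$ and $v\in(m_1,m_2)$ the intervening local minimum, so that $f_Y$ increases on $(0,m_1)$, decreases on $(m_1,v)$, increases on $(v,m_2)$ and decreases on $(m_2,\infty)$. For Item (a) I would take the separation $m_2-m_1$ large, concretely large enough that a full unit window fits inside each monotone stretch adjacent to the valley (for instance $m_1+1<v$ and $v+1<m_2$). One then checks that $\psi$ realises the sign pattern $+,-,+,-$: $\psi>0$ while the window $[t,t+1]$ climbs the first bump, $\psi<0$ as it slides off into the valley, $\psi>0$ again as it climbs the second bump, and $\psi<0$ past $m_2$. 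Hence $P$ has two local maxima, one near each mode, separated by a local minimum near $v$. Since these maxima are more than two units apart when $m_2-m_1$ is large, sampling $P$ at the integers preserves two distinct local maxima of $\big(p(x|\boldsymbol{\theta})\big)$ — equivalently, each bin best covering a bump carries strictly more probability than every bin lying in the valley — which is the asserted discrete bimodality.

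For Item (b) the goal is to show that small separation forces a single sign change of $\psi$. The cleanest route is a coalescence argument: as $m_2-m_1\to 0$ the two modes merge and, by the smoothness of $g$ and of $G\circ a_{\boldsymbol{\theta}}$, the density $f_Y(\cdot|\boldsymbol{\theta})$ converges uniformly to a regular unimodal log-symmetric density $\tilde f$ with a single peak $m$. For such $\tilde f$ the function $\tilde P(t)=\int_{t}^{t+1}\tilde f$ has derivative $\tilde f(t+1)-\tilde f(t)$, which is positive while the whole window lies left of $m$, negative while it lies right of $m$, and crosses zero exactly once in between; hence $\tilde P$ is strictly unimodal and its integer samples are unimodal. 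By continuity of the bin integrals in the shape parameter, $P$ remains strictly unimodal on the relevant range for all sufficiently small separations, so $\big(p(x|\boldsymbol{\theta})\big)$ is unimodal, giving Item (b).

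The main obstacle is precisely the transition analysis for the sign changes of $\psi(t)=f_Y(t+1|\boldsymbol{\theta})-f_Y(t|\boldsymbol{\theta})$ as a function of the mode separation: one must show that the three sign changes of Item (a) persist once the separation exceeds a threshold and collapse to one below it. In Item (b) the delicate point is excluding spurious extra crossings of $\psi$ while the unit window straddles both bumps, since width-$1$ averaging need not preserve unimodality for an arbitrary density; this is exactly where the regularity of the kernel $g$ and the coalescence limit do the work, replacing an explicit global estimate. In Item (a) the corresponding care is to guarantee that the valley is deep enough relative to the bump heights over a unit window, which is what the hypothesis that the distance is big enough secures.
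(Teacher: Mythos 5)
Your treatment of Item (a) is, up to notation, the paper's own argument: writing $p(x|\boldsymbol{\theta})=\int_x^{x+1}f_Y(y|\boldsymbol{\theta})\,\mathrm{d}y$ and controlling the differences $p(x|\boldsymbol{\theta})-p(x\pm1|\boldsymbol{\theta})$ through the pointwise comparisons $f_Y(y|\boldsymbol{\theta})\gtrless f_Y(y\mp1|\boldsymbol{\theta})$ on the four monotone branches of the bimodal density is exactly what the paper does with its inequalities \eqref{relat-1}--\eqref{relat-2}; your $\psi(t)=f_Y(t+1|\boldsymbol{\theta})-f_Y(t|\boldsymbol{\theta})$ is the integrand of those comparisons, and your ``ambiguous window straddling a critical point'' is the paper's explicit case analysis of $p(\lfloor y_0\rfloor-1|\boldsymbol{\theta})$, $p(\lfloor y_0\rfloor|\boldsymbol{\theta})$, $p(\lfloor y_0\rfloor+1|\boldsymbol{\theta})$. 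That part is sound (modulo the same level of informality as the paper about what ``big enough'' means; you require a unit window inside each monotone stretch, the paper requires $\epsilon>6$).

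Item (b) is where you genuinely depart from the paper, and where there is a gap. The theorem is a statement about a \emph{fixed} kernel $g$ and a fixed bimodal density whose modes happen to be close together; it is not a statement about a one-parameter family of densities whose modes coalesce. Your coalescence argument therefore proves something different unless you first exhibit such a family and show the given density sits in it, which the hypotheses do not provide. Even granting the family, the limit interchange does not work as stated: uniform convergence $f_Y\to\tilde f$ gives pointwise (even uniform) convergence of the bin integrals $P\to\tilde P$, but strict unimodality of the integer samples is a system of infinitely many strict inequalities $p(x|\boldsymbol{\theta})<p(x+1|\boldsymbol{\theta})$ (for $x$ below the peak) and $p(x|\boldsymbol{\theta})>p(x+1|\boldsymbol{\theta})$ (above it); the ones in the tail have gaps tending to zero, so they are not preserved by an arbitrarily small uniform perturbation, and near the peak a small perturbation can split one local maximum into two. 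So ``by continuity of the bin integrals\ldots $P$ remains strictly unimodal for all sufficiently small separations'' is not justified. The argument the paper intends (and merely asserts as ``the same analysis'') is direct, not asymptotic: when $y_\epsilon-y_0$ is small, say less than $1$, every unit window $[x,x+1]$ that meets the valley also covers both bumps, the two increasing chains and the two decreasing chains of \eqref{ineqs-1}--\eqref{ineqs-2} merge into a single increase-then-decrease pattern, and the same three-term case analysis at the single transition index identifies the unique mode. Replacing your limiting argument by that direct comparison (i.e., verifying that $\int_x^{x+1}\psi(t)\,\mathrm{d}t$ changes sign only once when the two modes lie within one unit of each other) is what is needed to close Item (b).
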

\begin{proof}
	Since the proof of Item (b) follows the same analysis and steps as the first item, we are concerned with proving only Item (a). 
	
	Let $f_{Y}(y|\boldsymbol{\theta})$, $t>0$, be the bimodal PDF of the continuous random variable $Y\sim\textrm{LS}(\boldsymbol{\theta},g)$, where the distance between their modes, denoted by $y_0>0$ and $y_\epsilon=y_0+\epsilon$, is big enough ($\epsilon>6$). From bimodality property, there is $y_*\in(y_0, y_{\epsilon})$ such that the following inequalities hold:
	\begin{align}
	f_{Y}(y\vert\boldsymbol{\theta}) 
	\geqslant 
	f_{Y}(y-1|\boldsymbol{\theta})
	\quad \text{for all} \ y\leqslant y_0  \ \mbox{and} \  
	y_*\leqslant y \leqslant y_{\epsilon};   \label{relat-1}
	\\[0,2cm]
	f_{Y}(y\vert\boldsymbol{\theta}) 
	\geqslant 
	f_{Y}(y+1\vert\boldsymbol{\theta})
	\quad \mbox{for all} \ y_* \geqslant y\geqslant y_0  
	\ \mbox{and} \  y \geqslant y_{\epsilon}.   \label{relat-2}
	\end{align}
	If $x$ is a natural number such that $x\leqslant \lfloor y_0 \rfloor -1$ and
	$\lfloor y_*\rfloor +1\leqslant x \leqslant \lfloor y_{\epsilon}\rfloor-1$, from above inequalities, we have
	\begin{align*}
	p(x|\boldsymbol{\theta})
	=
	\int_{x}^{x+1}
	f_{Y}(y|\boldsymbol{\theta}) \, {\rm d} y
	\stackrel{\eqref{relat-1}}{\geqslant}
	\int_{x}^{x+1}
	f_{Y}(y-1|\boldsymbol{\theta}) \, {\rm d} y
	=
	p(x-1|\boldsymbol{\theta}).
	\end{align*}
	Already, if $x$ is a natural number such that
	$\lfloor y_*\rfloor-1\geqslant x\geqslant \lfloor y_0 \rfloor+1$
	and $x \geqslant \lfloor y_{\epsilon}\rfloor+1$, we have
	\begin{align*}
	p(x|\boldsymbol{\theta})
	=
	\int_{x}^{x+1}
	f_{Y}(y|\boldsymbol{\theta}) \, {\rm d} y
	\stackrel{\eqref{relat-2}}{\geqslant}
	\int_{x}^{x+1}
	f_{Y}(y+1|\boldsymbol{\theta}) \, {\rm d} y
	=
	p(x+1|\boldsymbol{\theta}).
	\end{align*}
	In other words, we have the following 
	\begin{align}\label{ineqs-1}
	\begin{array}{lllll}
	&p(0|\boldsymbol{\theta})
	\leqslant
	p(1|\boldsymbol{\theta})
	\leqslant
	\cdots
	\leqslant
	p(\lfloor y_0 \rfloor-1|\boldsymbol{\theta});
	\\
	&p(\lfloor y_0 \rfloor+1|\boldsymbol{\theta})
	\geqslant
	\cdots
	\geqslant
	p(\lfloor y_*\rfloor-2|\boldsymbol{\theta})
	\geqslant
	p(\lfloor y_*\rfloor-1|\boldsymbol{\theta});
	\end{array}
	\\[0,2cm]
	\label{ineqs-2}
	\begin{array}{lllll}
	&p(\lfloor y_*\rfloor +1|\boldsymbol{\theta})
	\leqslant
	p(\lfloor y_*\rfloor +2|\boldsymbol{\theta})
	\leqslant
	\cdots
	\leqslant
	p(\lfloor y_{\epsilon}\rfloor-1|\boldsymbol{\theta});
	\\
	&
	p(\lfloor y_{\epsilon}\rfloor+1|\boldsymbol{\theta})
	\geqslant
	p(\lfloor y_{\epsilon}\rfloor+2|\boldsymbol{\theta})
	\geqslant
	\cdots.
	\end{array}
	\end{align}
	From monotonicities \eqref{ineqs-1} and \eqref{ineqs-2}, one can guarantee the bimodality property of the discrete log-symmetric distribution.
	By using \eqref{ineqs-1}, we show how to obtain only one of the modes, since the other one can be obtained following a similar path. Indeed,  by \eqref{ineqs-1} it remains to relate the probabilities 
	$p(\lfloor y_0 \rfloor-1|\boldsymbol{\theta})$,
	$p(\lfloor y_0 \rfloor|\boldsymbol{\theta})$
	and  
	$p(\lfloor y_0 \rfloor+1|\boldsymbol{\theta})$
	to find the the first mode, denoted by $x_0$, of the discrete log-symmetric distribution. A simple observation shows that $x_0$ is given by
	\begin{align*}
	x_0=
	\begin{cases}
	\lfloor y_0 \rfloor+1, & \text{if} \
	p(\lfloor y_0 \rfloor-1|\boldsymbol{\theta})\leqslant
	p(\lfloor y_0 \rfloor|\boldsymbol{\theta})<
	p(\lfloor y_0 \rfloor+1|\boldsymbol{\theta}),
	\\
	\lfloor y_0 \rfloor, & \text{if} \
	p(\lfloor y_0 \rfloor-1|\boldsymbol{\theta})<
	p(\lfloor y_0 \rfloor|\boldsymbol{\theta})>
	p(\lfloor y_0 \rfloor+1|\boldsymbol{\theta}),
	\\
	\lfloor y_0 \rfloor-1, & \text{if} \
	p(\lfloor y_0 \rfloor-1|\boldsymbol{\theta})>
	p(\lfloor y_0 \rfloor|\boldsymbol{\theta})\geqslant
	p(\lfloor y_0 \rfloor+1|\boldsymbol{\theta}),
	\\
	\lfloor y_0 \rfloor-1 \ \text{and} \ \lfloor y_0 \rfloor, & \text{if} \
	p(\lfloor y_0 \rfloor-1|\boldsymbol{\theta})=
	p(\lfloor y_0 \rfloor|\boldsymbol{\theta})>
	p(\lfloor y_0 \rfloor+1|\boldsymbol{\theta}),
	\\
	\lfloor y_0 \rfloor \ \text{and} \ \lfloor y_0 \rfloor+1, & \text{if} \
	p(\lfloor y_0 \rfloor-1|\boldsymbol{\theta})<
	p(\lfloor y_0 \rfloor|\boldsymbol{\theta})=
	p(\lfloor y_0 \rfloor+1|\boldsymbol{\theta}),
	\\
	\lfloor y_0 \rfloor,  & \text{if} \
	p(\lfloor y_0 \rfloor-1|\boldsymbol{\theta})=
	p(\lfloor y_0 \rfloor+1|\boldsymbol{\theta})<
	p(\lfloor y_0 \rfloor|\boldsymbol{\theta}).
	\end{cases}
	\end{align*}
	Notice that, any other possible relation between $p(\lfloor y_0 \rfloor-1|\boldsymbol{\theta})$,
	$p(\lfloor y_0 \rfloor|\boldsymbol{\theta})$
	and  
	$p(\lfloor y_0 \rfloor+1|\boldsymbol{\theta})$ contradicts the fact that $y_0$ is a mode of the continuous log-symmetric distribution.
	As mentioned above,  using \eqref{ineqs-2}, the second mode of the discrete log-symmetric distribution is obtained in an analogous way. So we completed the proof.
\end{proof}

As an immediate consequence of Theorem \ref{teo-shapes}, the following result follows.
\begin{corollary}
	Let $g$ be a density generating kernel so that the corresponding continuous log-symmetric distribution, of a  random variable $Y$, is unimodal. Then the discrete log-symmetric distribution of $X=\lfloor Y\rfloor$ is also  unimodal.
\end{corollary}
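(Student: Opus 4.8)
The plan is to specialize the argument of Theorem \ref{teo-shapes} to the case of a single mode, where the two ``mode regions'' of the bimodal proof collapse into one. Since $Y\sim\textrm{LS}(\boldsymbol{\theta},g)$ is unimodal, its PDF $f_{Y}(\cdot|\boldsymbol{\theta})$ attains a unique maximum at some $y_0>0$, so $f_{Y}$ is non-decreasing on $(0,y_0]$ and non-increasing on $[y_0,+\infty)$. First I would record the one-unit analogues of \eqref{relat-1}--\eqref{relat-2}, namely
\begin{align*}
f_{Y}(y|\boldsymbol{\theta}) &\geqslant f_{Y}(y-1|\boldsymbol{\theta}) \quad \text{for all } y\leqslant y_0, \\
f_{Y}(y|\boldsymbol{\theta}) &\geqslant f_{Y}(y+1|\boldsymbol{\theta}) \quad \text{for all } y\geqslant y_0,
\end{align*}
which hold directly from monotonicity, with no largeness hypothesis on any gap (there is a single mode, so the constraint $\epsilon>6$ from the theorem is vacuous here).

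Next I would transfer these inequalities to the PMF exactly as in the theorem, using $p(x|\boldsymbol{\theta})=\int_{x}^{x+1} f_{Y}(y|\boldsymbol{\theta})\,\mathrm{d}y$. For any natural number $x\leqslant \lfloor y_0\rfloor-1$ the first inequality yields $p(x|\boldsymbol{\theta})\geqslant p(x-1|\boldsymbol{\theta})$, and for any $x\geqslant \lfloor y_0\rfloor+1$ the second yields $p(x|\boldsymbol{\theta})\geqslant p(x+1|\boldsymbol{\theta})$. This produces the two monotone runs
\begin{align*}
& p(0|\boldsymbol{\theta})\leqslant p(1|\boldsymbol{\theta})\leqslant\cdots\leqslant p(\lfloor y_0\rfloor-1|\boldsymbol{\theta}), \\
& p(\lfloor y_0\rfloor+1|\boldsymbol{\theta})\geqslant p(\lfloor y_0\rfloor+2|\boldsymbol{\theta})\geqslant\cdots,
\end{align*}
so the PMF is non-decreasing to the left of $\lfloor y_0\rfloor$ and non-increasing to its right, i.e.\ it exhibits a single peak.

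To pin down the mode I would compare the three central values $p(\lfloor y_0\rfloor-1|\boldsymbol{\theta})$, $p(\lfloor y_0\rfloor|\boldsymbol{\theta})$ and $p(\lfloor y_0\rfloor+1|\boldsymbol{\theta})$, reproducing the six-case table from the proof of Theorem \ref{teo-shapes}, now invoked only once. Each ordering of these three quantities that is incompatible with $y_0$ being the continuous mode is excluded precisely as before, leaving the discrete mode $x_0$ at $\lfloor y_0\rfloor-1$, $\lfloor y_0\rfloor$ or $\lfloor y_0\rfloor+1$ (possibly as a pair of adjacent equal maxima). Because the two monotone runs meet at this single location, the distribution of $X=\lfloor Y\rfloor$ has exactly one mode and is therefore unimodal.

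The argument is essentially automatic, and the only step requiring genuine care -- the one I would flag as the main obstacle -- is the boundary comparison near $\lfloor y_0\rfloor$: the integral inequalities do not by themselves order the three central probabilities, so one must argue, exactly as in Theorem \ref{teo-shapes}, that any such ordering inconsistent with $y_0$ being the unique continuous mode cannot occur. It is this exclusion that forces a single peak rather than a spurious second local maximum, and it is what makes the corollary a clean specialization of the bimodal result.
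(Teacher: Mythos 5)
Your proof is correct and follows exactly the route the paper intends: the paper gives no separate argument, stating only that the corollary is ``an immediate consequence of Theorem \ref{teo-shapes},'' and your write-up is precisely that specialization --- the one-unit monotonicity inequalities, their transfer to the PMF by integration over unit intervals, and the three-way comparison near $\lfloor y_0\rfloor$ resolved by excluding any ordering incompatible with $y_0$ being the unique continuous mode. No discrepancy with the paper's approach; if anything, you supply more detail than the paper does.
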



\section{Maximum likelihood estimation}\label{sec:03}

\subsection{Uncensored data}

Let $(X_{1}, \dots, X_{n})$ be a random sample of size $n$ from a random variable $X$ with
PMF given by \eqref{relation} and $\boldsymbol{x}=(x_{1}, \dots, x_{n})$ their observations (data).
Then,
the log-likelihood function for a parameter vector $\bm{\theta}=(\lambda,\phi)$ 
is
given
by
\begin{align}
\ell(\bm\theta)
=
\ell(\bm\theta|\boldsymbol{x})
&=
\sum_{i=1}^n \log
p(x_i|\boldsymbol{\theta})
=
{\sum_{i=1}^n} 
\log\left\{
G\big[a_{\boldsymbol{\theta}}(x_i+1)\big]
-
G\big[a_{\boldsymbol{\theta}}(x_i)\big]
\right\}.
\label{logvero}
\end{align}

The roots of the system formed by the partial derivatives of the log-likelihood function $\ell(\bm\theta)$ with respect to $\lambda$ and $\phi$ are the estimates of these parameters, respectively. Thus, we must solve the following system of equations:
\begin{align*}
{\partial \ell(\bm\theta)\over \partial \theta}
&=
(Z_g)^{-1}
\sum _{i=1}^n 
\sum_{j=0}^{1}
(-1)^{j+1} \,
{\partial a_{\boldsymbol{\theta}}(x_{i}+j)\over \partial \theta} \,
{g\big[a_{\boldsymbol{\theta}}^2(x_i+j)\big] \over  p(x_i|\boldsymbol{\theta})}
=0,
\quad 
\theta\in\{\lambda,\phi\},
\end{align*}
where 
$Z_g=\int_{-\infty}^{\infty}g(w^2)dw$ and,
\begin{align}\label{pri-der-a}
{\partial a_{\boldsymbol{\theta}}(x_i)\over \partial\lambda}
=
-\big(\lambda\phi^{1/2}\big)^{-1},
\quad  
{\partial a_{\boldsymbol{\theta}}(x_i)\over \partial \phi}
=
\log\Big({x_i\over\lambda}\Big)^{1/(2\phi^{3/2})}.
\end{align}
Note that they must be solved by an iterative procedure for non-linear optimization, such as the Broyden-Fletcher-Goldfarb-Shanno (BFGS) quasi-Newton method; see \citet[][p.\,199]{mjm:00}.

Inference for $\bm \theta$ of the $\textrm{LS}_{\rm d}$ model can be based on the asymptotic distribution of the maximum likelihood estimator $\widehat{\bm \theta}$. Under classic regularity conditions, this estimatior is bivariate normal distributed with mean $\bm \theta$ and covariance matrix $\bm{\Sigma}_{^{\widehat{\bm \theta}}}$, namely, 
$$
\sqrt{n}\,(\widehat{{\bm \theta}} -{\bm \theta}) \stackrel{\rm D}{\to} \textrm{N}_{2}\big(\bm{0}, \bm{\Sigma}_{^{\widehat{\bm \theta}}} = [\mathcal{J}({\bm \theta})]^{-1}\big),
$$
as $n \to \infty$, where $\stackrel{\rm D}{\to}$ means ``convergence in distribution'', $\mathcal{I}({\bm \theta})$ is the expected Fisher information matrix, and $\mathcal{J}({\bm \theta}) = \lim_{n\to\infty}({1}/{n}) [\mathcal{I}({\bm \theta})]$. Observe that $[\widehat{\mathcal{I}}({\bm \theta})]^{-1}$ is a consistent estimator of the asymptotic covariance matrix of $\widehat{\bm \theta}$. Observe also that one may use the Hessian matrix to obtain the observed version of the expected Fisher information matrix.   

The Hessian matrix of $\ell(\bm\theta)$ is given by
\begin{align*}
\big[{\ddot{\ell}_{\theta\theta'}(\bm\theta)}\big]_{2\times 2}
=
\begin{bmatrix}
{\partial^2 \ell(\bm\theta)\over\partial\lambda^2}
&
{\partial^2 \ell(\bm\theta)\over\partial\lambda \partial\phi}
\\[0,2cm]
{\partial^2 \ell(\bm\theta)\over\partial\phi \partial\lambda}
&
{\partial^2 \ell(\bm\theta)\over\partial\phi^2}
\end{bmatrix},
\end{align*} 
where its elements, for each $\theta, \theta'\in\{\lambda,\phi\}$, are 
\begin{align}
&\ddot{\ell}_{\theta\theta'}(\bm\theta)
=
(Z_g)^{-1}
\sum _{i=1}^n 
\sum_{j=0}^{1}
(-1)^{j+1}\,
\left[
{\partial^2 a_{\boldsymbol{\theta}}({x_{i}+j})\over \partial\theta \partial\theta'}	
+
\Theta_j(x_i)
-
\Omega_j(x_i)
\right]
{g\big[a^2({x_{i}+j})\big]\over p(x_i|\boldsymbol{\theta})}. \nonumber
\end{align} 
Here we adopt the following notation:
\begin{align}
& 
\Theta_j(x_i)=
2 a_{\boldsymbol{\theta}}({x_i+j})\,
g'\big[a^2({x_i+j})\big]\,
{\partial a_{\boldsymbol{\theta}}({x_i+j})\over\partial\theta} \,
{\partial a_{\boldsymbol{\theta}}({x_i+j})\over\partial\theta'};     \label{Mdef}
\\[0,1cm]
& 
\Omega_j(x_i)=
(Z_g)^{-1}
{\partial a_{\boldsymbol{\theta}}({x_i+j})\over\partial\theta}
\sum_{k=0}^{1} (-1)^{k+1}\,
{\partial a_{\boldsymbol{\theta}}({x_i+k})\over\partial\theta'}\,
{g\big[a^2({x_i+k})\big]\over p(x_i|\boldsymbol{\theta})};
\label{Mdef-1}
\end{align}
whenever the density generating kernel $g$ be differentiable.
The above second-order partial derivatives of $a_{\boldsymbol{\theta}}(\cdot)$, with respect to the parameters, are given by
\begin{align}\label{sec-der-a}
\begin{matrix}
\displaystyle
{\partial^2 a_{\boldsymbol{\theta}}(x_i)\over\partial\lambda^2}
=
\big(\lambda^2\phi^{1/2}\big)^{-1};
&
\displaystyle
{\partial^2 a_{\boldsymbol{\theta}}(x_i)\over\partial\phi^2}
=
\log\Big({x_i \over \lambda}\Big)^{-3/(4\phi^{5/2})};
\\[0,35cm]
\displaystyle
{\partial a_{\boldsymbol{\theta}}(x_i)\over\partial\lambda\partial\phi}
=
\big(2\lambda\phi^{3/2}\big)^{-1};
&
\displaystyle
{\partial a_{\boldsymbol{\theta}}(x_i)\over\partial\phi\partial\lambda}
=
\big(2\lambda\phi^{3/2}\big)^{-1}.
\end{matrix}
\end{align}

Under certain regularity conditions, the Fisher information matrix 
\begin{align*}
\big[{\mathcal{I}_{\theta\theta'}(\bm\theta)}\big]_{2\times 2}
=
-
\begin{bmatrix}
\mathbb{E}
{\partial^2 p(X|\boldsymbol{\theta}) \over\partial\lambda^2}
&
\mathbb{E}
{\partial^2 p(X|\boldsymbol{\theta})\over\partial\lambda \partial\phi}
\\[0,2cm]
\mathbb{E}
{\partial^2 p(X|\boldsymbol{\theta})\over\partial\phi \partial\lambda}
&
\mathbb{E}
{\partial^2 p(X|\boldsymbol{\theta})\over\partial\phi^2}
\end{bmatrix},
\quad X\sim\textrm{LS}_{\rm d}(\boldsymbol{\theta},g),
\end{align*} 
has elements of the following form
\begin{align*}
{\mathcal{I}_{\theta\theta'}(\bm\theta)}
=
(Z_g)^{-1}
\sum _{x=0}^\infty 
\sum_{j=0}^{1}
(-1)^{j}
\bigg[
{\partial^2 a_{\boldsymbol{\theta}}({x+j})\over \partial\theta \partial\theta'}
+
\Theta_j(x)
-
\Omega_j(x) 
\bigg]\, 
g\big[a^2({x+j})\big],
\end{align*}
for each $\theta, \theta'\in\{\lambda,\phi\}$,
where $\Theta_j(\cdot)$ and
$\Omega_j(\cdot)$ are given in \eqref{Mdef} and \eqref{Mdef-1}, respectively, and 
whenever the above series converges absolutely.

The extra parameter $\xi$ (or parameter vector $\bm\xi$) associated with $g$ is selected by using the profile log-likelihood function. For instance, in the case of the discrete log-Student-$t$ distribution, two steps are 
require:
\begin{itemize}
	\item[i)] Let $\xi_{k}=k$ and for each $k=1,..,100$ compute the $k$-th maximum likelihood estimate of ${\bm{\theta}}_k=({\lambda}_k,{\phi}_k)^\intercal$, $\widehat{\bm{\theta}}_k=(\widehat{\lambda}_k,\widehat{\phi}_k)^\intercal$ say. Compute also the $k$-th log-likelihood function value $\ell_k(\widehat{\bm\theta}_k)$;
	\item[ii)] The final estimate of $\xi$, $\widehat{\xi}=\xi_k$ say, is the one which maximizes the log-likelihood function, that is, $\widehat{\xi}\in \{\mbox{argmax}_{\xi_k} \ell_k(\widehat{\bm\theta}_k)\}$,  and the estimate of $\bm{\theta}$ is $\widehat{\bm{\theta}}_k=(\widehat{\lambda}_k,\widehat{\phi}_k)^\intercal$.
\end{itemize}

\subsection{Censored data}

Let $Y_i\sim\textrm{LS}(\boldsymbol{\theta},g)$ be the failure time of the $i$-th 
individual and let $\delta_i$ indicate whether 
the $i$-th individual is censored or not. 
Let us define
$d_k =$ ``number of failures at time $t_k$'',
$q_k =$ ``number censored at time $t_k$'' and
$N_k = \sum_{i=k}^{\infty} (d_i + q_i)$.
Note that $N_k-d_k$ represents the number survived just before time $t_k+1$.
That is, in each given time $t_k$, there are $d_k$ failures and $N_k-d_k$ survivals.

Since the data are discrete observing $\{(Y_i,\delta_i)\}$ is equivalent to observing $\{(d_k, q_k)\}$,
the likelihood function for the random censoring is given by
\begin{align*}
L^R(\bm\theta)
=
{\displaystyle\prod_{i=1}^n} 
\big[f_{Y}(y_i|\boldsymbol{\theta})\big]^{\delta_i}
\big[1-F_{Y}(y_i|\boldsymbol{\theta})\big]^{1-\delta_i}
=
{\displaystyle\prod_{k=1}^\infty} 
\big[p(x_k|\boldsymbol{\theta})\big]^{d_k} 
\big[p(x_k|\boldsymbol{\theta})+R(x_k|\boldsymbol{\theta})\big]^{q_k}.
\end{align*}
This type of censoring has as special
case type I and II censoring.
The corresponding log-likelihood is
\begin{align}\label{log-lik-cens}
\ell^R(\bm\theta)
&=
{\sum_{k=1}^\infty} 
\Big\{
{d_k}
\log
p(x_k|\boldsymbol{\theta})
+ 
{q_k}
\log\big[p(x_k|\boldsymbol{\theta})+R(x_k|\boldsymbol{\theta})\big]
\Big\}
\\[0,15cm]
&=
{\sum_{k=1}^\infty} 
{d_k}
\log
\big\{
G\big[a_{\boldsymbol{\theta}}(x_k+1)\big]
-
G\big[a_{\boldsymbol{\theta}}(x_k)\big]
\big\}
+
{\sum_{k=1}^\infty} 
{q_k}
\log\big\{1-G\big[a_{\boldsymbol{\theta}}(x_k)\big] \big\}
\nonumber.
\end{align}
\begin{remark}
	By Proposition \ref{chac-re}, the log-likelihood \eqref{log-lik-cens} can be rewritten in terms of HR as
	\[
	\ell^R(\bm\theta)
	=
	{\sum_{k=1}^\infty} 
	(d_k+q_k)
	\bigg\{
	{d_k\over d_k+q_k} \log\big[ H(x_k|\boldsymbol{\theta})\big]
	-
	\log\big[1- H(x_k|\boldsymbol{\theta})\big]
	+
	\sum_{y=0}^{x_k-1}
	\log\big[1- H(y|\boldsymbol{\theta})\big]
	\bigg\},
	\]
	whenever the above series converges absolutely.
\end{remark}

Differentiating  in \eqref{log-lik-cens}, a straightforward computation shows that 
\begin{align*}
{\partial \ell^R(\bm\theta)\over\partial\theta}
&=
(Z_g)^{-1}
\sum _{k=1}^\infty
d_k
\sum_{j=0}^{1}
(-1)^{j+1}\,
{\partial a_{\boldsymbol{\theta}}(x_{k}+j)\over\partial\theta}\,
{g\big[a_{\boldsymbol{\theta}}^2(x_k+j)\big] \over  p(x_k|\boldsymbol{\theta})}
\\[0,1cm]
&
-
(Z_g)^{-1}
\sum _{k=1}^\infty
q_k\, 
{\partial a_{\boldsymbol{\theta}}(x_{k})\over\partial\theta} \,
{g\big[a_{\boldsymbol{\theta}}^2(x_k)\big] \over  
	p(x_k|\boldsymbol{\theta})+R(x_k|\boldsymbol{\theta}) },
\quad \theta\in\{\lambda,\phi\},
\end{align*}
where $Z_g=\int^{\infty}_{-\infty} g(w^2)  \,\textrm{d}w$.
The mixed partial derivatives of $\ell^R(\bm\theta)$ are given by
\begin{align}
{\partial^2\ell^R(\bm\theta)\over \partial\theta\partial\theta' }
&=
(Z_g)^{-1}
\sum _{k=1}^\infty
d_k
\sum_{j=0}^{1}
(-1)^{j+1}\,
\bigg[
{\partial^2 a_{\boldsymbol{\theta}}({x_{k}+j})\over \partial\theta \partial\theta'}
+
\Theta_j(x_k)
-
\Omega_j(x_k)
\bigg]\, 
{g\big[a^2_{\boldsymbol{\theta}}({x_{k}+j})\big]\over p(x_k|\boldsymbol{\theta})}
\nonumber
\\[0,1cm]
&
-
(Z_g)^{-1}
\sum _{k=1}^\infty
q_k\, 
\bigg[
{\partial^2 a_{\boldsymbol{\theta}}({x_{k}})\over \partial\theta \partial\theta'}
+
\Theta_0(x_k)
+
\hat{\Omega}(x_k)
\bigg]\,
{g\big[a_{\boldsymbol{\theta}}^2(x_k)\big] \over  
	p(x_k|\boldsymbol{\theta})+R(x_k|\boldsymbol{\theta}) },
\quad \theta, \theta'\in\{\lambda,\phi\},
\nonumber
\end{align}
where
\[
\hat{\Omega}(x_k)=
(Z_g)^{-1}
{g\big[a^2({x_k})\big]\over p(x_k|\boldsymbol{\theta})+R(x_k|\boldsymbol{\theta}) } \,
{\partial a_{\boldsymbol{\theta}}({x_k})\over\partial\theta}
{\partial a_{\boldsymbol{\theta}}({x_k})\over\partial\theta'},
\]
and $\Theta_j(\cdot), \Omega_j(\cdot)$ are as in \eqref{Mdef} and \eqref{Mdef-1}, respectively. The first and second derivatives of function $a_{\boldsymbol{\theta}}({\cdot})$ are given in \eqref{pri-der-a} and \eqref{sec-der-a}, respectively.

\section{Monte Carlo simulation study}\label{sec:04}

A Monte Carlo simulation study was carried out to evaluate the performance of the maximum likelihood estimators for the $\textrm{LS}_{\rm d}$ models, particularly the log-normal, log-Student-$t$, 
log-contaminated-normal, log-power-exponential, extended Birnbaum-Saunders and 
extended Birnbaum-Saunders-$t$ cases. Note that when $\phi=4$ (fixed) the Birnbaum-Saunders and Birnbaum-Saunders-$t$ are obtained. All numerical evaluations were done in the \texttt{R} software; see \cite{R:15}.

The simulation scenario considers: sample size $n \in \{40,120,400\}$, values of true 
parameters $\phi \in \{1,4,8\}$, $\lambda \in \{2.00\}$, censoring proportions $\{0\%,10\%,30\%\}$, and $1,000$ Monte Carlo replications for each sample size. The values of the true extra parameters are presented in the caption of each table. 

The maximum likelihood estimation results are presented in Tables~\ref{t1}--\ref{t6}.  The following sample statistics for the maximum likelihood estimates are reported: empirical mean, 
bias, and mean squared error (MSE). A look at the results in Tables~\ref{t1}--\ref{t6} allows us to conclude that, as the sample size increases, the bias and MSE of all the estimators decrease, indicating that they are asymptotically unbiased, as expected. Moreover, as the censoring proportion increases, the performances of the estimators of $\phi$ and $\lambda$, deteriorate. Generally, all of these results show the 
good performance of the proposed model.

\begin{table}[H]
	\caption{Empirical values of mean, bias and MSE from simulated discrete log-normal data for the indicated maximum likelihood estimators.}
	\label{t1}
	\scalefont{0.82}
	\centering
	\begin{tabular}{lccccccccccccc}
		\hline
		\multirow{2}{*}{\thead{n}}&\multirow{2}{*}{Cen.}&&\multicolumn{3}{c}{$\phi=1$}&&\multicolumn{3}{c}{$\phi=4$}&&\multicolumn{3}{c}{$\phi=8$}\\
		\cline{4-6}\cline{8-10}\cline{12-14}
		&&&Mean&Bias&MSE&&Mean&Bias&MSE&&Mean&Bias&MSE\\
		\cline{4-14}
		\multirow{2}{*}{40}&\multirow{6}{*}{0\%}&$\hat{\phi}$&1.0119&0.0119&0.0851&&4.0752&0.0752&1.7247&&8.1388&0.1388&7.5728\\
		&&$\hat{\lambda}$&2.0262&0.0262&0.1166&&2.1126&0.1126&0.5629&&2.2586&0.2586&1.4043\\
		\multirow{2}{*}{120}&&$\hat{\phi}$&1.0059&0.0059&0.0260&&4.0413&0.0413&0.5237&&8.0768&0.0768&2.1683\\
		&&$\hat{\lambda}$&2.0086&0.0086&0.0394&&2.0317&0.0317&0.1759&&2.0750&0.0750&0.3897\\
		\multirow{2}{*}{400}&&$\hat{\phi}$&1.0041&0.0041&0.0077&&4.0184&0.0184&0.1522&&8.0404&0.0404&0.6641\\
		&&$\hat{\lambda}$&2.0021&0.0021&0.0110&&2.0086&0.0086&0.0464&&2.0198&0.0198&0.1020\\
		\hline
		\multirow{2}{*}{40}&\multirow{6}{*}{10\%}&$\hat{\phi}$&1.0189&0.0189&0.0998&&4.1507&0.1507&2.2755&&8.3002&0.3002&10.0324\\
		&&$\hat{\lambda}$&2.0301&0.0301&0.1198&&2.1154&0.1154&0.5850&&2.2528&0.2528&1.4323\\
		\multirow{2}{*}{120}&&$\hat{\phi}$&1.0114&0.0114&0.0290&&4.0519&0.0519&0.5978&&8.1162&0.1162&2.5822\\
		&&$\hat{\lambda}$&2.0099&0.0099&0.0396&&2.0416&0.0416&0.1795&&2.0842&0.0842&0.4037\\
		\multirow{2}{*}{400}&&$\hat{\phi}$&1.0015&0.0015&0.0090&&4.0262&0.0262&0.1817&&8.0675&0.0675&0.8152\\
		&&$\hat{\lambda}$&2.0012&0.0012&0.0109&&2.0014&0.0014&0.0466&&2.0073&0.0073&0.0984\\
		\hline
		\multirow{2}{*}{40}&\multirow{6}{*}{30\%}&$\hat{\phi}$&1.0562&0.0562&0.1839&&4.4113&0.4113&5.7506&&8.9965&0.9965&28.9825\\
		&&$\hat{\lambda}$&2.0435&0.0435&0.1407&&2.1396&0.1396&0.6858&&2.2925&0.2925&1.7228\\
		\multirow{2}{*}{120}&&$\hat{\phi}$&1.0218&0.0218&0.0458&&4.1116&0.1116&1.1065&&8.2539&0.2539&5.0762\\
		&&$\hat{\lambda}$&2.0124&0.0124&0.0438&&2.0450&0.0450&0.1915&&2.0868&0.0868&0.4270\\
		\multirow{2}{*}{400}&&$\hat{\phi}$&1.0093&0.0093&0.0133&&4.0840&0.0840&0.3158&&8.1961&0.1961&1.5102\\
		&&$\hat{\lambda}$&2.0046&0.0046&0.0119&&2.0076&0.0076&0.0501&&2.0131&0.0131&0.1043\\
		\hline
	\end{tabular}
\end{table}

\begin{table}[H]
	\caption{Empirical values of mean, bias and MSE from simulated discrete log-Student-$t$ data for the indicated maximum likelihood estimators with $\xi=4$.}
	\label{t2}
	\scalefont{0.82}
	\centering
	\begin{tabular}{lccccccccccccc}
		\hline
		\multirow{2}{*}{\thead{$n$}}&\multirow{2}{*}{Cen.}&&\multicolumn{3}{c}{$\phi=1$}&&\multicolumn{3}{c}{$\phi=4$}&&\multicolumn{3}{c}{$\phi=8$}\\
		\cline{4-6}\cline{8-10}\cline{12-14}
		&&&Mean&Bias&MSE&&Mean&Bias&MSE&&Mean&Bias&MSE\\
		\cline{4-14}
		\multirow{2}{*}{40}&\multirow{6}{*}{0\%}&$\hat{\phi}$&1.0250&0.0250&0.1313&&4.1432&0.1432&2.6039&&8.3035&0.3035&14.7992\\
		&&$\hat{\lambda}$&2.0351&0.0351&0.1589&&2.1416&0.1416&0.7463&&2.3098&0.3098&1.9332\\
		\multirow{2}{*}{120}&&$\hat{\phi}$&1.0093&0.0093&0.0408&&4.0316&0.0316&0.8793&&7.9743&-0.0257&4.0502\\
		&&$\hat{\lambda}$&2.0179&0.0179&0.0497&&2.0657&0.0657&0.2195&&2.1694&0.1694&0.7087\\
		\multirow{2}{*}{400}&&$\hat{\phi}$&0.9974&-0.0026&0.0123&&3.9778&-0.0222&0.4006&&7.9731&-0.0269&2.8371\\
		&&$\hat{\lambda}$&1.9986&-0.0014&0.0137&&2.0148&0.0148&0.0684&&2.0657&0.0657&0.2455\\
		\hline
		\multirow{2}{*}{40}&\multirow{6}{*}{10\%}&$\hat{\phi}$&1.0332&0.0332&0.1499&&4.2117&0.2117&3.2204&&8.5131&0.5131&14.0700\\
		&&$\hat{\lambda}$&2.0318&0.0318&0.1692&&2.1338&0.1338&0.8105&&2.2965&0.2965&2.1397\\
		\multirow{2}{*}{120}&&$\hat{\phi}$&1.0149&0.0149&0.0466&&4.0564&0.0564&0.9162&&8.1134&0.1134&4.0340\\
		&&$\hat{\lambda}$&2.0068&0.0068&0.0510&&2.0434&0.0434&0.2086&&2.0916&0.0916&0.4621\\
		\multirow{2}{*}{400}&&$\hat{\phi}$&1.0026&0.0026&0.0126&&4.0285&0.0285&0.2729&&8.0335&0.0335&1.1661\\
		&&$\hat{\lambda}$&1.9987&-0.0013&0.0146&&2.0000&0.0000&0.0598&&2.0137&0.0137&0.1208\\
		\hline
		\multirow{2}{*}{40}&\multirow{6}{*}{30\%}&$\hat{\phi}$&1.0772&0.0772&0.2681&&4.5456&0.5456&8.5155&&9.3122&1.3122&41.0324\\
		&&$\hat{\lambda}$&2.0430&0.0430&0.2007&&2.1605&0.1605&1.0059&&2.3570&0.3570&3.1707\\
		\multirow{2}{*}{120}&&$\hat{\phi}$&1.0148&0.0148&0.0666&&4.0801&0.0801&1.4871&&8.1426&0.1426&6.8379\\
		&&$\hat{\lambda}$&2.0032&0.0032&0.0534&&2.0429&0.0429&0.2210&&2.0866&0.0866&0.4900\\
		\multirow{2}{*}{400}&&$\hat{\phi}$&1.0129&0.0129&0.0168&&4.0890&0.0890&0.4097&&8.1679&0.1679&1.8873\\
		&&$\hat{\lambda}$&2.0021&0.0021&0.0160&&2.0040&0.0040&0.0645&&2.0179&0.0179&0.1297\\
		\hline
	\end{tabular}
\end{table}

\begin{table}[H]
	\caption{Empirical values of mean, bias and MSE from simulated discrete log-contaminated-normal data for the indicated maximum likelihood estimators with $\bm{\xi}=(0.5,0.5)^\intercal$.}
	\label{t3}
	\scalefont{0.82}
	\centering
	\begin{tabular}{lccccccccccccc}
		\hline
		\multirow{2}{*}{\thead{$n$}}&\multirow{2}{*}{Cen.}&&\multicolumn{3}{c}{$\phi=1$}&&\multicolumn{3}{c}{$\phi=4$}&&\multicolumn{3}{c}{$\phi=8$}\\
		\cline{4-6}\cline{8-10}\cline{12-14}
		&&&Mean&Bias&MSE&&Mean&Bias&MSE&&Mean&Bias&MSE\\
		\cline{4-14}
		\multirow{2}{*}{40}&\multirow{6}{*}{0\%}&$\hat{\phi}$&1.0057&0.0057&0.0947&&4.0459&0.0459&1.8954&&8.0944&0.0944&7.9832\\
		&&$\hat{\lambda}$&2.0437&0.0437&0.1706&&2.1750&0.1750&0.8474&&2.3687&0.3687&2.1616\\
		\multirow{2}{*}{120}&&$\hat{\phi}$&1.0024&0.0024&0.0300&&4.0207&0.0207&0.6042&&8.0165&0.0165&2.5295\\
		&&$\hat{\lambda}$&2.0161&0.0161&0.0545&&2.0589&0.0589&0.2503&&2.1297&0.1297&0.5661\\
		\multirow{2}{*}{400}&&$\hat{\phi}$&1.0020&0.0020&0.0085&&4.0061&0.0061&0.1779&&7.9912&-0.0088&0.8225\\
		&&$\hat{\lambda}$&2.0024&0.0024&0.0165&&2.0161&0.0161&0.0745&&2.0410&0.0410&0.1751\\
		\hline
		\multirow{2}{*}{40}&\multirow{6}{*}{10\%}&$\hat{\phi}$&1.0336&0.0336&0.1484&&4.1971&0.1971&3.1525&&8.4153&0.4153&12.8221\\
		&&$\hat{\lambda}$&2.0562&0.0562&0.1819&&2.1952&0.1952&0.9257&&2.3944&0.3944&2.3858\\
		\multirow{2}{*}{120}&&$\hat{\phi}$&1.0154&0.0154&0.0396&&4.0676&0.0676&0.7806&&8.1552&0.1552&3.3793\\
		&&$\hat{\lambda}$&2.0182&0.0182&0.0588&&2.0688&0.0688&0.2649&&2.1338&0.1338&0.5975\\
		\multirow{2}{*}{400}&&$\hat{\phi}$&1.0067&0.0067&0.0103&&4.0278&0.0278&0.2113&&8.0581&0.0581&0.9135\\
		&&$\hat{\lambda}$&1.9988&-0.0012&0.0163&&2.0077&0.0077&0.0735&&2.0222&0.0222&0.1586\\
		\hline
		\multirow{2}{*}{40}&\multirow{6}{*}{30\%}&$\hat{\phi}$&1.0644&0.0644&0.2448&&4.4774&0.4774&7.4371&&9.1280&1.1280&33.9336\\
		&&$\hat{\lambda}$&2.0635&0.0635&0.2088&&2.2106&0.2106&1.0845&&2.4422&0.4422&3.1094\\
		\multirow{2}{*}{120}&&$\hat{\phi}$&1.0310&0.0310&0.0591&&4.1681&0.1681&1.3522&&8.4147&0.4147&6.3718\\
		&&$\hat{\lambda}$&2.0237&0.0237&0.0635&&2.0767&0.0767&0.2895&&2.1436&0.1436&0.6536\\
		\multirow{2}{*}{400}&&$\hat{\phi}$&1.0126&0.0126&0.0161&&4.0609&0.0609&0.3648&&8.1301&0.1301&1.6252\\
		&&$\hat{\lambda}$&2.0005&0.0005&0.0176&&2.0102&0.0102&0.0771&&2.0237&0.0237&0.1628\\
		\hline
	\end{tabular}
\end{table}

\begin{table}[H]
	\caption{Empirical values of mean, bias and MSE from simulated discrete log-power-exponential data for the indicated maximum likelihood estimators with $\xi=-0.5$.}
	\label{t4}
	\scalefont{0.82}
	\centering
	\begin{tabular}{lccccccccccccc}
		\hline
		\multirow{2}{*}{\thead{$n$}}&\multirow{2}{*}{Cen.}&&\multicolumn{3}{c}{$\phi=1$}&&\multicolumn{3}{c}{$\phi=4$}&&\multicolumn{3}{c}{$\phi=8$}\\
		\cline{4-6}\cline{8-10}\cline{12-14}
		&&&Mean&Bias&MSE&&Mean&Bias&MSE&&Mean&Bias&MSE\\
		\cline{4-14}
		\multirow{2}{*}{40}&\multirow{6}{*}{0\%}&$\hat{\phi}$&0.9783&-0.0217&0.0470&&3.9497&-0.0503&1.1520&&7.9186&-0.0814&5.6648\\
		&&$\hat{\lambda}$&2.0072&0.0072&0.0481&&2.0493&0.0493&0.2856&&2.1332&0.1332&0.6806\\
		\multirow{2}{*}{120}&&$\hat{\phi}$&0.9970&-0.0030&0.0134&&4.0014&0.0014&0.3399&&7.9966&-0.0034&1.5695\\
		&&$\hat{\lambda}$&2.0032&0.0032&0.0153&&2.0185&0.0185&0.0863&&2.0482&0.0482&0.1996\\
		\multirow{2}{*}{400}&&$\hat{\phi}$&0.9969&-0.0031&0.0042&&3.9906&-0.0094&0.1019&&7.9864&-0.0136&0.4742\\
		&&$\hat{\lambda}$&1.9998&-0.0002&0.0047&&2.0036&0.0036&0.0282&&2.0098&0.0098&0.0640\\
		\hline
		\multirow{2}{*}{40}&\multirow{6}{*}{10\%}&$\hat{\phi}$&0.9941&-0.0059&0.0612&&4.0123&0.0123&1.5967&&8.0169&0.0169&8.0172\\
		&&$\hat{\lambda}$&2.0046&0.0046&0.0510&&2.0472&0.0472&0.2900&&2.1398&0.1398&0.6989\\
		\multirow{2}{*}{120}&&$\hat{\phi}$&0.9999&-0,0001&0.0167&&4.0151&0.0151&0.4585&&8.0260&0.0260&2.0633\\
		&&$\hat{\lambda}$&2.0010&0.0010&0.0165&&2.0099&0.0099&0.0866&&2.0349&0.0349&0.1968\\
		\multirow{2}{*}{400}&&$\hat{\phi}$&1.0020&0.0020&0.0053&&4.0178&0.0178&0.1283&&8.0441&0.0441&0.6083\\
		&&$\hat{\lambda}$&2.0010&0.0010&0.0050&&2.0044&0.0044&0.0298&&2.0084&0.0084&0.0661\\
		\hline
		\multirow{2}{*}{40}&\multirow{6}{*}{30\%}&$\hat{\phi}$&1.0093&0.0093&0.0890&&4.1602&0.1602&3.3402&&8.2982&0.2982&15.8037\\
		&&$\hat{\lambda}$&2.0111&0.0111&0.0646&&2.0512&0.0512&0.3185&&2.1408&0.1408&0.7509\\
		\multirow{2}{*}{120}&&$\hat{\phi}$&1.0040&0.0040&0.0250&&4.0485&0.0485&0.7736&&8.1109&0.1109&3.7337\\
		&&$\hat{\lambda}$&2.0019&0.0019&0.0191&&2.0104&0.0104&0.0927&&2.0350&0.0350&0.2090\\
		\multirow{2}{*}{400}&&$\hat{\phi}$&1.0026&0.0026&0.0079&&4.0215&0.0215&0.1934&&8.0665&0.0665&0.9711\\
		&&$\hat{\lambda}$&2.0010&0.0010&0.0056&&2.0045&0.0045&0.0318&&2.0093&0.0093&0.0694\\
		\hline
	\end{tabular}
\end{table}

\begin{table}[H]
	\caption{Empirical values of mean, bias and MSE from simulated discrete extended Birnbaum-Saunders data for the indicated maximum likelihood estimators with $\zeta=0.5$.}
	\label{t5}
	\scalefont{0.82}
	\centering
	\begin{tabular}{lccccccccccccc}
		\hline
		\multirow{2}{*}{\thead{$n$}}&\multirow{2}{*}{Cen.}&&\multicolumn{3}{c}{$\phi=1$}&&\multicolumn{3}{c}{$\phi=4$}&&\multicolumn{3}{c}{$\phi=8$}\\
		\cline{4-6}\cline{8-10}\cline{12-14}
		&&&Mean&Bias&MSE&&Mean&Bias&MSE&&Mean&Bias&MSE\\
		\cline{4-14}
		\multirow{2}{*}{40}&\multirow{6}{*}{0\%}&$\hat{\phi}$&0.8741&-0.1259&0.1721&&3.9719&-0.0281&0.9531&&7.9871&-0.0129&4.3427\\
		&&$\hat{\lambda}$&2.0126&0.0126&0.0075&&2.0117&0.0117&0.0275&&2.0152&0.0152&0.0542\\
		\multirow{2}{*}{120}&&$\hat{\phi}$&0.9747&-0.0253&0.0353&&4.0090&0.0090&0.3178&&8.0289&0.0289&1.3258\\
		&&$\hat{\lambda}$&2.0049&0.0049&0.0029&&2.0032&0.0032&0.0090&&2.0037&0.0037&0.0185\\
		\multirow{2}{*}{400}&&$\hat{\phi}$&0.9913&-0.0087&0.0117&&4.0042&0.0042&0.0978&&8.0266&0.0266&0.4048\\
		&&$\hat{\lambda}$&2.0020&0.0020&0.0009&&2.0011&0.0011&0.0026&&2.0004&0.0004&0.0051\\
		\hline
		\multirow{2}{*}{40}&\multirow{6}{*}{10\%}&$\hat{\phi}$&0.8825&-0.1175&0.1945&&4.0107&0.0107&1.2217&&8.1117&0.1117&5.3717\\
		&&$\hat{\lambda}$&2.0128&0.0128&0.0076&&2.0120&0.0120&0.0289&&2.0172&0.0172&0.0565\\
		\multirow{2}{*}{120}&&$\hat{\phi}$&0.9699&-0.0301&0.0460&&4.0048&0.0048&0.3449&&8.0872&0.0872&1.6128\\
		&&$\hat{\lambda}$&2.0047&0.0047&0.0028&&2.0043&0.0043&0.0096&&2.0031&0.0031&0.0187\\
		\multirow{2}{*}{400}&&$\hat{\phi}$&0.9907&-0.0093&0.0134&&4.0000&0.0000&0.1167&&8.0092&0.0092&0.4932\\
		&&$\hat{\lambda}$&2.0005&0.0005&0.0009&&1.9997&-0.0003&0.0026&&2.0005&0.0005&0.0050\\
		\hline
		\multirow{2}{*}{40}&\multirow{6}{*}{30\%}&$\hat{\phi}$&0.7674&-0.2326&0.3577&&4.0841&0.0841&2.0748&&8.3517&0.3517&8.9625\\
		&&$\hat{\lambda}$&2.0152&0.0152&0.0067&&2.0178&0.0178&0.0343&&2.0280&0.0280&0.0692\\
		\multirow{2}{*}{120}&&$\hat{\phi}$&0.9263&-0.0737&0.0967&&4.0323&0.0323&0.5129&&8.1723&0.1723&2.5286\\
		&&$\hat{\lambda}$&2.0052&0.0052&0.0028&&2.0057&0.0057&0.0106&&2.0055&0.0055&0.0210\\
		\multirow{2}{*}{400}&&$\hat{\phi}$&0.9852&-0.0148&0.0212&&4.0188&0.0188&0.1660&&8.0638&0.0638&0.7290\\
		&&$\hat{\lambda}$&2.0009&0.0009&0.0009&&2.0011&0.0011&0.0028&&2.0028&0.0028&0.0056\\
		\hline
	\end{tabular}
\end{table}

\begin{table}[H]
	\caption{Empirical values of mean, bias and MSE from simulated discrete extended Birnbaum-Saunders-$t$ data for the indicated maximum likelihood estimators with $\bm{\xi}=(0.5,4)^\intercal$.}
	\label{t6}
	\scalefont{0.82}
	\centering
	\begin{tabular}{lccccccccccccc}
		\hline
		\multirow{2}{*}{\thead{$n$}}&\multirow{2}{*}{Cen.}&&\multicolumn{3}{c}{$\phi=1$}&&\multicolumn{3}{c}{$\phi=4$}&&\multicolumn{3}{c}{$\phi=8$}\\
		\cline{4-6}\cline{8-10}\cline{12-14}
		&&&Mean&Bias&MSE&&Mean&Bias&MSE&&Mean&Bias&MSE\\
		\cline{4-14}
		\multirow{2}{*}{40}&\multirow{6}{*}{0\%}&$\hat{\phi}$&0.9871&-0.0129&0.1511&&3.9863&-0.0137&1.4404&&8.0258&0.0258&6.0994\\
		&&$\hat{\lambda}$&2.0028&0.0028&0.0114&&2.0100&0.0100&0.0387&&2.0167&0.0167&0.0756\\
		\multirow{2}{*}{120}&&$\hat{\phi}$&0.9998&-0.0002&0.0457&&4.0319&0.0319&0.4972&&8.0689&0.0689&2.1333\\
		&&$\hat{\lambda}$&2.0026&0.0026&0.0037&&2.0061&0.0061&0.0128&&2.0086&0.0086&0.0248\\
		\multirow{2}{*}{400}&&$\hat{\phi}$&0.9984&-0.0016&0.0141&&3.9996&-0.0004&0.1569&&8.0282&0.0282&0.6348\\
		&&$\hat{\lambda}$&2.0018&0.0018&0.0010&&2.0031&0.0031&0.0038&&2.0035&0.0035&0.0074\\
		\hline
		\multirow{2}{*}{40}&\multirow{6}{*}{10\%}&$\hat{\phi}$&1.0002&0.0002&0.1663&&4.1093&0.1093&1.8263&&8.2395&0.2395&7.9333\\
		&&$\hat{\lambda}$&2.0009&0.0009&0.0109&&2.0029&0.0029&0.0387&&2.0095&0.0095&0.0760\\
		\multirow{2}{*}{120}&&$\hat{\phi}$&1.0055&0.0055&0.0512&&4.0492&0.0492&0.5522&&8.1287&0.1287&2.3725\\
		&&$\hat{\lambda}$&2.0027&0.0027&0.0037&&2.0044&0.0044&0.0129&&2.0080&0.0080&0.0245\\
		\multirow{2}{*}{400}&&$\hat{\phi}$&1.0003&0.0003&0.0157&&3.9944&-0.0056&0.1649&&8.0085&0.0085&0.6848\\
		&&$\hat{\lambda}$&2.0011&0.0011&0.0012&&2.0020&0.0020&0.0042&&2.0028&0.0028&0.0078\\
		\hline
		\multirow{2}{*}{40}&\multirow{6}{*}{30\%}&$\hat{\phi}$&0.9829&-0.0171&0.2728&&4.1695&0.1695&2.7474&&8.4158&0.4158&12.0941\\
		&&$\hat{\lambda}$&2.0027&0.0027&0.0114&&2.0055&0.0055&0.0438&&2.0151&0.0151&0.0882\\
		\multirow{2}{*}{120}&&$\hat{\phi}$&1.0115&0.0115&0.0708&&4.1024&0.1024&0.7985&&8.2566&0.2566&3.5320\\
		&&$\hat{\lambda}$&2.0029&0.0029&0.0038&&2.0065&0.0065&0.0140&&2.0119&0.0119&0.0273\\
		\multirow{2}{*}{400}&&$\hat{\phi}$&1.0009&0.0009&0.0238&&4.0094&0.0094&0.2337&&8.0494&0.0494&0.9990\\
		&&$\hat{\lambda}$&2.0014&0.0014&0.0012&&2.0037&0.0037&0.0047&&2.0049&0.0049&0.0087\\
		\hline
	\end{tabular}
\end{table}

\section{Illustrative examples}\label{sec:05}

The $\textrm{LS}_{\rm d}$ models are now used to analyze two real-world data sets. It is considered the following discrete $\textrm{LS}_{\rm d}$ models: 
log-normal (LN), 
log-Student-$t$ (L$t$),
log-contamined-normal (LCN),
log-power-exponential (LPE), 
Birnbaum-Saunders (BS),
extended Birnbaum-Saunders (EBS),
Birnbaum-Saunders-$t$ (BS$t$), and
extended Birnbaum-Saunders-$t$ (EBS$t$).

\begin{Example}
	
	The first data set corresponds to the number of times that a DEC-20 computer broke down in each of 128 consecutive weeks of operation. This computer has operated at the Open University during the 1980s; see Table \ref{t:data2} and \cite{trenkler:95}. Descriptive statistics for the computer breaks data set are the following: $128$(sample size), $0$(minimum), $22$(maximum), $3$(median), $4.016$(mean), $3.808$(standard deviation), $94.839$(coefficient of variation), $1.732$(coefficient of skewness) and $3.995$(coefficient of kurtosis). 
	From these results, we observe the positive skewness and a high degree of kurtosis. Figure~\ref{fig:ex2}(left) shows the histogram for the computer breaks data, from where it is confirmed the positive skewness. Moreover, Figure~\ref{fig:ex2}(right) shows the usual and adjusted boxplots, and we note that some potential outliers are not in fact outliers when the adjusted boxplot is observed.

	\begin{table}[ht!]
		\caption{Computer breaks data.}
		\label{t:data2}
		\scalefont{0.82}
		\centering
		\begin{tabular}{lccccccccccccccccc}
			\hline
			x&0 & 1  &2  &3  &4  &5  &6  &7  &8  &9 &10 &11 &12 &13 &16 &17 &22\\ 
			frequency&15 &19 &23 &14 &15 &10  &8  &4  &6  &2  &3  &3  &2  &1  &1  &1  &1\\
			\hline
		\end{tabular}
	\end{table}

	\begin{figure}[!ht]
		\centering
		\psfrag{0.00}[c][c]{\scriptsize{0.00}}
		\psfrag{0.05}[c][c]{\scriptsize{0.05}}
		\psfrag{0.10}[c][c]{\scriptsize{0.10}}
		\psfrag{0.15}[c][c]{\scriptsize{0.15}}
		\psfrag{0.20}[c][c]{\scriptsize{0.20}}
		\psfrag{0.25}[c][c]{\scriptsize{0.25}}
		\psfrag{0.30}[c][c]{\scriptsize{0.30}}
		\psfrag{0.0}[c][c]{\scriptsize{0.0}}
		\psfrag{0.1}[c][c]{\scriptsize{0.1}}
		\psfrag{0.2}[c][c]{\scriptsize{0.2}}
		\psfrag{0.3}[c][c]{\scriptsize{0.3}}
		\psfrag{0.4}[c][c]{\scriptsize{0.4}}
		\psfrag{0.5}[c][c]{\scriptsize{0.5}}
		\psfrag{0.6}[c][c]{\scriptsize{0.6}}
		\psfrag{0.7}[c][c]{\scriptsize{0.7}}
		\psfrag{0.8}[c][c]{\scriptsize{0.8}}
		\psfrag{1.0}[c][c]{\scriptsize{1.0}}
		\psfrag{0}[c][c]{\scriptsize{0}} \psfrag{2}[c][c]{\scriptsize{2}}
		\psfrag{4}[c][c]{\scriptsize{4}} \psfrag{5}[c][c]{\scriptsize{5}}
		\psfrag{6}[c][c]{\scriptsize{6}} \psfrag{8}[c][c]{\scriptsize{8}}
		\psfrag{0}[c][c]{\scriptsize{0}}
		\psfrag{10}[c][c]{\scriptsize{10}}
		\psfrag{15}[c][c]{\scriptsize{15}}
		\psfrag{20}[c][c]{\scriptsize{20}}
		\psfrag{30}[c][c]{\scriptsize{30}}
		\psfrag{40}[c][c]{\scriptsize{40}}
		\psfrag{50}[c][c]{\scriptsize{50}}
		\psfrag{100}[c][c]{\scriptsize{100}}
		\psfrag{150}[c][c]{\scriptsize{150}}
		\psfrag{dp}[c][c]{\scriptsize{$x$}}
		\psfrag{c}[c][c]{\scriptsize{$x$}}
		\psfrag{dc}[c][c]{\scriptsize{Frequency}}
		\psfrag{aa}[c][c]{\scriptsize{usual boxplot}}
		\psfrag{ad}[c][c]{\scriptsize{adjusted boxplot}}
		{\includegraphics[height=7.2cm,width=7.2cm]{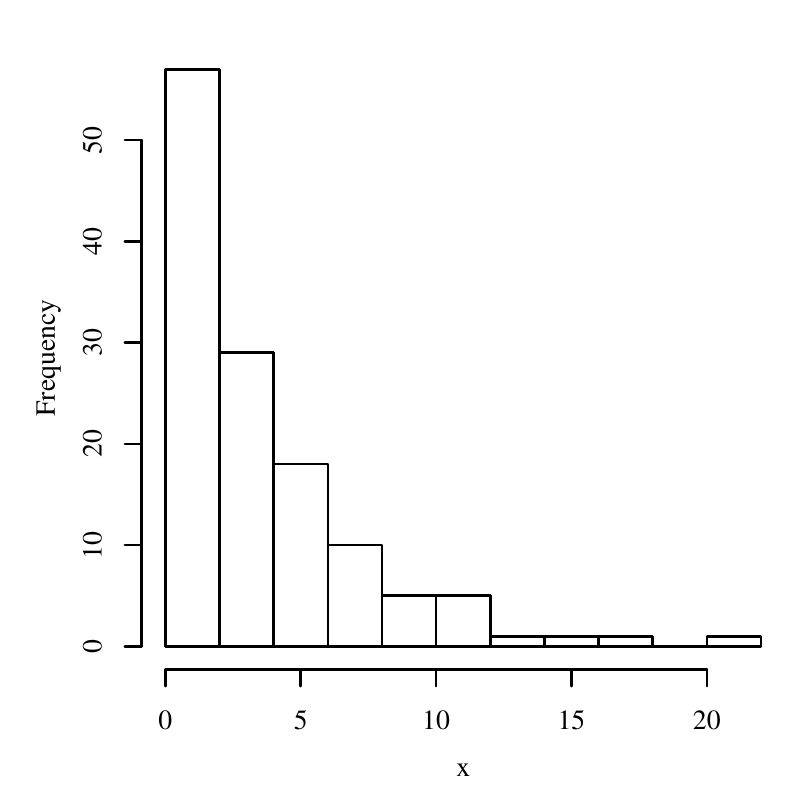}}\hspace{-0.25cm}
		{\includegraphics[height=7.2cm,width=7.2cm]{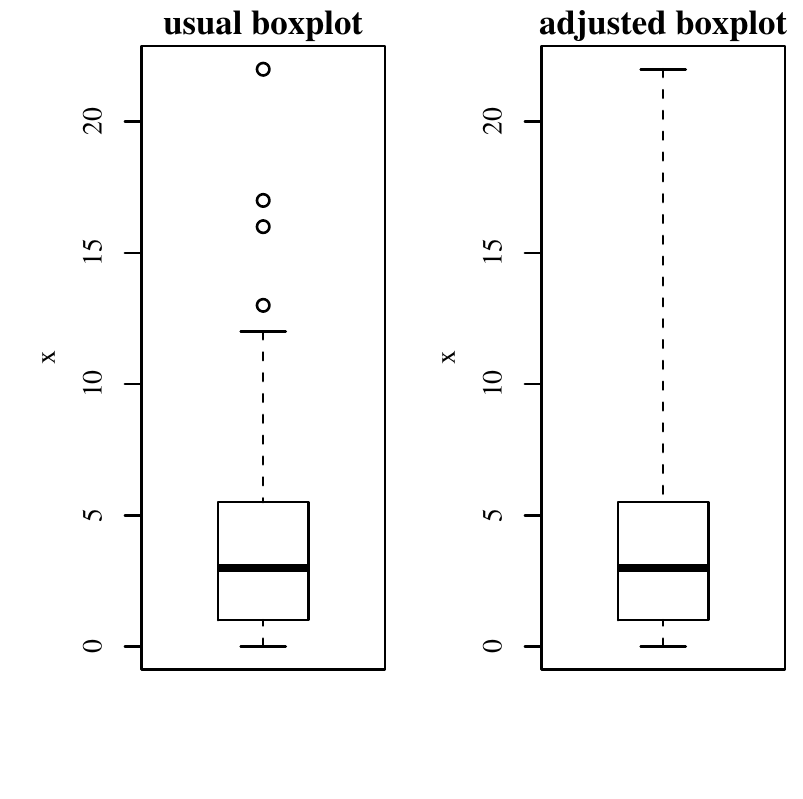}}
		\caption{\small Histogram (left) and boxplots (right) for the computer breaks data.}
		\label{fig:ex2}
	\end{figure}

	Table~\ref{tab:ex2} presents the maximum likelihood estimates, computed by the BFGS method, and standard errors (SEs) for the $\textrm{LS}_{\rm d}$ models parameters. Moreover, the $p$-values of the $\chi^2$ and Cramer-Von Mises (CVM) statistics, and the 
	the Akaike (AIC) and Bayesian information (BIC) criteria, are also reported. The results of Table \ref{tab:ex2} reveal that 
	the discrete BS model provides the best adjustment compared to other models based on the values of AIC and BIC.

	\begin{table}[H]
		\caption{Maximum likelihood estimates (with SE in parentheses) and model selection measures for fit to the computer breaks data.}
		\label{tab:ex2}
		\scalefont{0.81}
		\centering
		\begin{tabular}{lcccccccc}
			\hline
			\multirow{2}{*}{Model}&\multicolumn{3}{c}{Estimates}&&\multicolumn{2}{c}{$p$-value}&\multirow{2}{*}{AIC}&\multirow{2}{*}{BIC}\\
			\cline{2-4}
			\cline{6-7}
			&$\widehat{\lambda}\ (\text{SE})$&$\widehat{\phi}\ (\text{SE})$&$\widehat{\xi}$ ($\widehat{\bm\xi}$)&&$\chi^2$&CMV&&\\
			\hline
			LN&3.2280 (0.2526)&0.7541 (0.1048)&$-$&&0.7841&0.6959&643.5141&652.0702\\
			L$t$&3.2653 (0.2574)&0.7065 (0.1026)&20.0&&0.8306&0.7762&644.2248&652.7809\\
			LCN&3.2283 (0.2526)&0.6858 (0.0953)&(0.9, 0.9)&&0.7996&0.6967&645.5205&656.9287\\
			LPE&3.1624 (0.2770)&1.0176 (0.0555)&-0.2&&0.7096&0.5151&642.8785&651.4346\\
			BS&3.1704 (0.0589)&$*$&0.9&&0.6007&0.5283&640.6061&646.3102\\
			EBS&3.1436 (0.2090)&2.9392 (0.0438)&1.1&&0.6517&0.4700&642.4045&650.9605\\
			BS$t$&3.1803 (0.3008)&$*$&(0.9, 20.0)&&0.7966&0.5799&642.9026&651.4587\\
			EBS$t$&3.1406 (0.2193)&2.0820 (0.0818)&(1.3, 20.0)&&0.6492&0.4364&644.5534&655.9615\\
			\hline
			\multicolumn{9}{l}{{\scriptsize $*$ indicates that $\phi=4$ (fixed)}.}\\
		\end{tabular}
	\end{table}

\end{Example}

\begin{Example}
	
	The second data set refers to the number of physiotherapy sessions until a patient's chronic back pain is reduced or alleviated; see Table \ref{t:data2}. The patients were submitted to electric currents and the study was developed by the School of Physiotherapy Clinics of City University of Sao Paulo (UNICID), Sao Paulo, Brazil; see~\cite{SILVAETAL2017}. Observations were considered censored to the right when patients did not report pain reduction or relief after $12$ treatment sessions, or if they had been lost to follow-up. Such as in \cite{vns:19}, the variable of interest is defined as $T = X - 1$, $t = 0,1,2,3,\ldots$, where $t = 0$ denotes a patient who presented pain relief in the first session performed. Descriptive statistics for the pain relief data are the following: $100$(sample size), $0$(minimum), $11$(maximum), $0$(median), $0.98$(mean), $1.933$(standard deviation), $197.258$(coefficient of variation), $2.802$(coefficient of skewness) and $9.015$(coefficient of kurtosis). These statistics values indicate the positive skewness and a high degree of kurtosis. Figure \ref{fig:ex3} shows the histogram and the fitted survival function by the Kaplan-Meier (KM) method.

	\begin{table}[!ht]
		\footnotesize
		\centering
		\caption{Number of sessions until a patient's chronic back pain is reduced or alleviated.}\label{t:data2}
		\begin{tabular}{ccccccccccccccccccccccc}
			\hline
			Sessions  & $T$  & \# at risk & \# of events & censoring indicator \\
			\hline
			1                   & 0    &  100   & 64      & 0        \\
			2                   & 1    &  36    & 16      & 0        \\
			3                   & 2    &  20    & 5       & 1        \\
			4                   & 3    &  14    & 4       & 0        \\
			5                   & 4    &  10    & 4       & 0        \\
			6                   & 5    &  6     & 3       & 0        \\
			7                   & 6    &  3     & 0       & 0        \\
			8                   & 7    &  3     & 1       & 0        \\
			9                   & 8    &  2     & 0       & 0        \\
			10                  & 9    &  2     & 1       & 0        \\
			11                  & 10   &  1     & 0       & 0        \\
			12                  & 11   &  1     & 0       & 1        \\
			\hline
		\end{tabular}
	\end{table}

	\begin{figure}[!ht]
		\centering
		{\includegraphics[height=7.2cm,width=7.2cm]{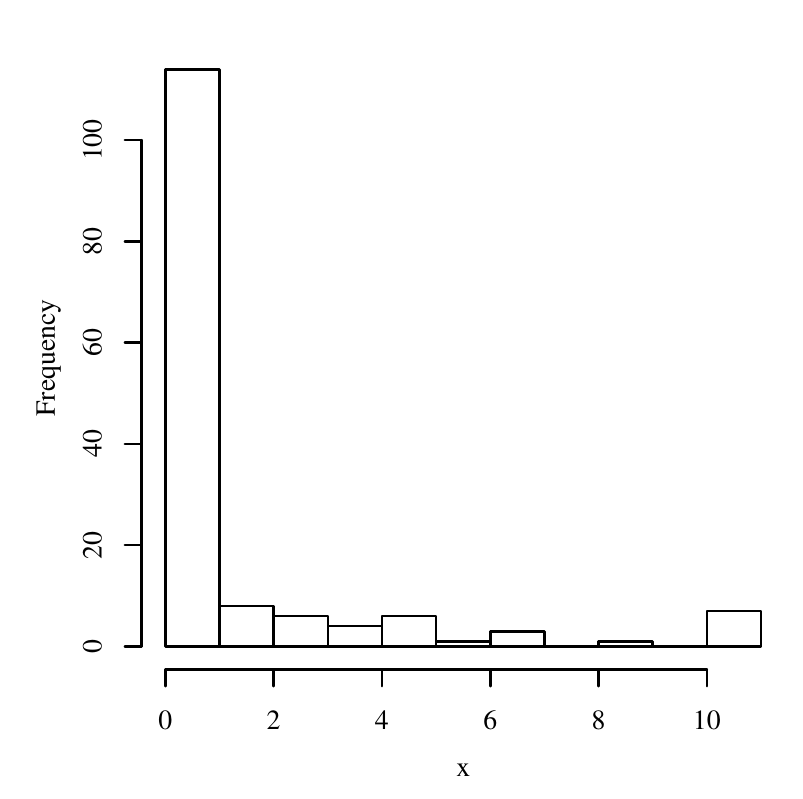}}\hspace{-0.25cm}
		{\includegraphics[height=7.2cm,width=7.2cm]{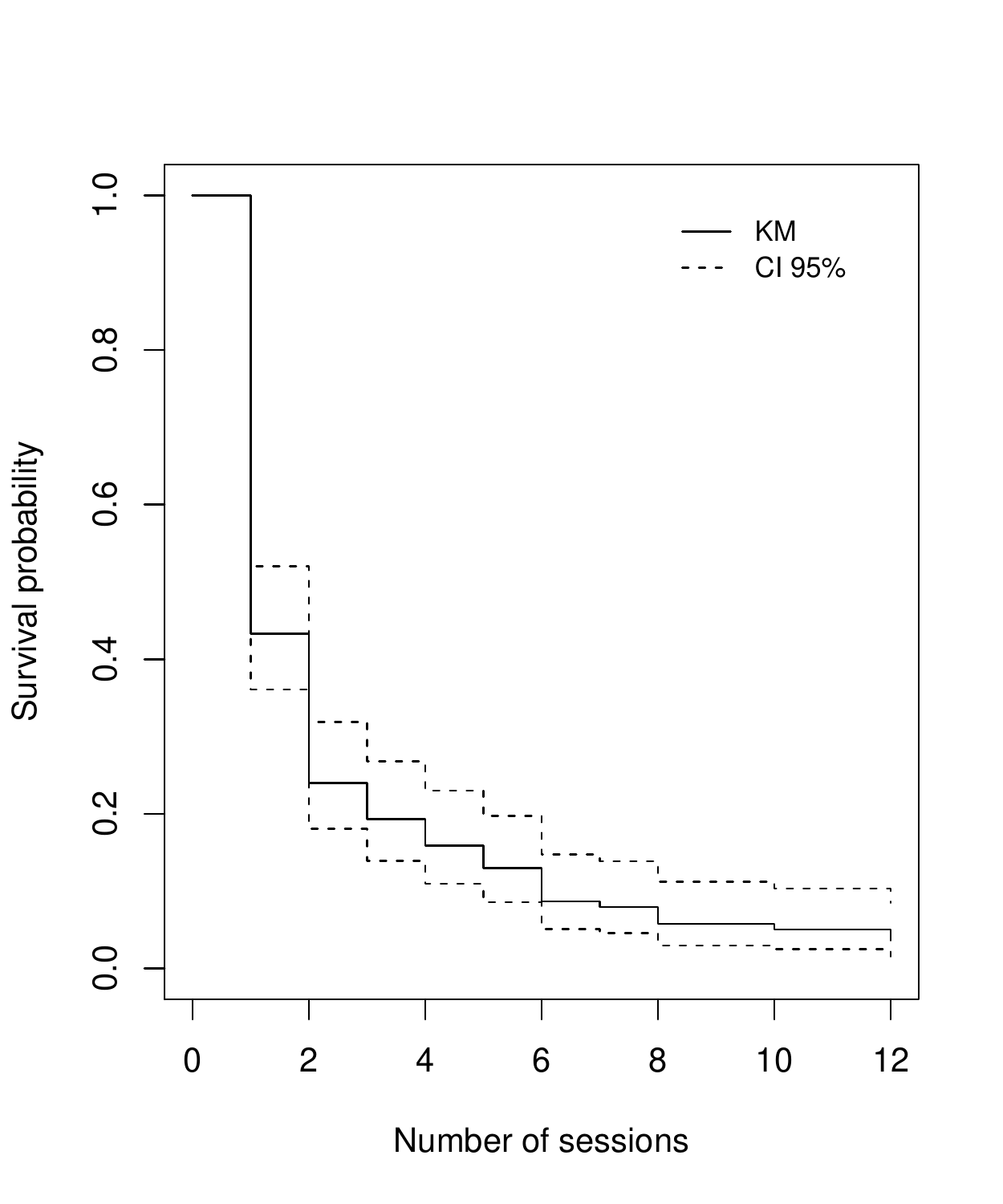}}
		\caption{\small Histogram (left) and KM (right) for the  pain relief data.}
		\label{fig:ex3}
	\end{figure}

	The maximum likelihood estimates of the discrete log-symmetric distribution parameters, along with AIC and BIC criteria are reported in Table \ref{t:ml-censo}. We note that the log-Student-$t$ model provides better adjustment compared to the other models based on the values of AIC and BIC. Table \ref{t:ml-censo} and \ref{t:ci_km} present the fitted survival functions obtained by the KM and the discrete log-symmetric models. These results suggest that (extended) Birnbaum-Saunders and log-normal models yield the best fits to the pain relief data.

	\begin{table}[H]
		\caption{Maximum likelihood estimates and model selection measures for fit to the pain relief data.}
		\label{t:ml-censo}
		\scalefont{0.82}
		\centering
		\begin{tabular}{llcc}
			\hline
			Discrete distribution&Estimates (SE)&AIC&BIC\\
			\hline
			\multirow{2}{*}{Log-normal}&$\hat{\lambda}$=2.3229 (0.1329)&540.0872&549.1191\\
			&$\hat{\phi}$=0.462 (0.0571)&&\\
			\multirow{2}{*}{Log-Student-$t$}&$\hat{\lambda}$=1.8745 (0.1250)&513.1153&522.1472\\
			&$\hat{\phi}$=0.122 (0.0538)&&\\
			&$\hat{\zeta}$=2&&\\
			\multirow{2}{*}{Log-Power-Exponential}&$\hat{\lambda}$=2.0046 (0.0999)&528.1035&537.1354\\
			&$\hat{\phi}$=0.1713 (0.0259)&&\\
			&$\hat{\zeta}$=0.5&&\\
			\multirow{2}{*}{Log-Contamined-Normal}&$\hat{\lambda}$=1.8654 (0.1329)&513.3146&525.3571\\
			&$\hat{\phi}$=0.1018 (0.0519)&&\\
			&$\hat{\zeta}$=(0.37;0.10)&&\\
			\multirow{2}{*}{Birnbaum-Saunders}&$\hat{\lambda}$=2.4767 (0.8043)&543.6507&549.6719\\
			&$\hat{\zeta}$=0.7&&\\
			\multirow{2}{*}{Extended Birnbaum-Saunders}&$\hat{\lambda}$=2.3263 (0.156)&540.2164&549.2483\\
			&$\hat{\phi}$=184.8547 (0.0701)&&\\
			&$\hat{\zeta}$=0.1&&\\
			\multirow{2}{*}{Birnbaum-Saunders-$t$}&$\hat{\lambda}$=1.8966 (0.2047)&515.0695&524.1014\\
			&$\hat{\zeta}$=(0.4;2.0)&&\\
			\multirow{2}{*}{Extended Birnbaum-Saunders-$t$}&$\hat{\lambda}$=1.8751 (0.1477)&515.1634&527.2060\\
			&$\hat{\phi}$=49.0515 (0.1995)&&\\
			&$\hat{\zeta}$=(0.1;2.0)&&\\
			\hline
		\end{tabular}
	\end{table}

	\begin{table}[H]
		\caption{Estimates of the survival function via KM and discrete log-symmetric distributions.}
		\label{t:ci_km}
		\scalefont{0.82}
		\centering
		\begin{tabular}{lcccccccccc}
			\hline
			$x$&KM&LN&L-$t$&LPE&LCN&BS&EBS&BS-$t$&EBS-$t$\\
			\hline
			0&1&0.8925&0.8931&0.8698&0.8848&0.9100&0.8930&0.8774&0.8930\\
			1&0.4333&0.5871&0.4350&0.5018&0.4354&0.6202&0.5880&0.4533&0.4354\\
			2&0.2400&0.3533&0.1552&0.2412&0.1610&0.3919&0.3541&0.1835&0.1557\\
			3&0.1933&0.2120&0.0811&0.1315&0.0885&0.2447&0.2126&0.0982&0.0812\\
			4&0.1588&0.1297&0.0534&0.0791&0.0614&0.1528&0.1301&0.0640&0.0534\\
			5&0.1299&0.0813&0.0398&0.0511&0.0458&0.0958&0.0815&0.0466&0.0396\\
			6&0.0866&0.0523&0.0318&0.0348&0.0352&0.0603&0.0524&0.0364&0.0316\\
			7&0.0794&0.0344&0.0267&0.0247&0.0276&0.0381&0.0344&0.0297&0.0265\\
			8&0.0577&0.0232&0.0231&0.0182&0.0220&0.0242&0.0231&0.0250&0.0228\\
			10&0.0505&0.0111&0.0184&0.0106&0.0145&0.0098&0.0110&0.0190&0.0181\\
			12&0.0361&0.0056&0.0155&0.0067&0.0101&0.0040&0.0056&0.0153&0.0152\\
			\hline
		\end{tabular}
	\end{table}
	
	\begin{figure}[H]
		\centering
		\includegraphics[scale=0.55]{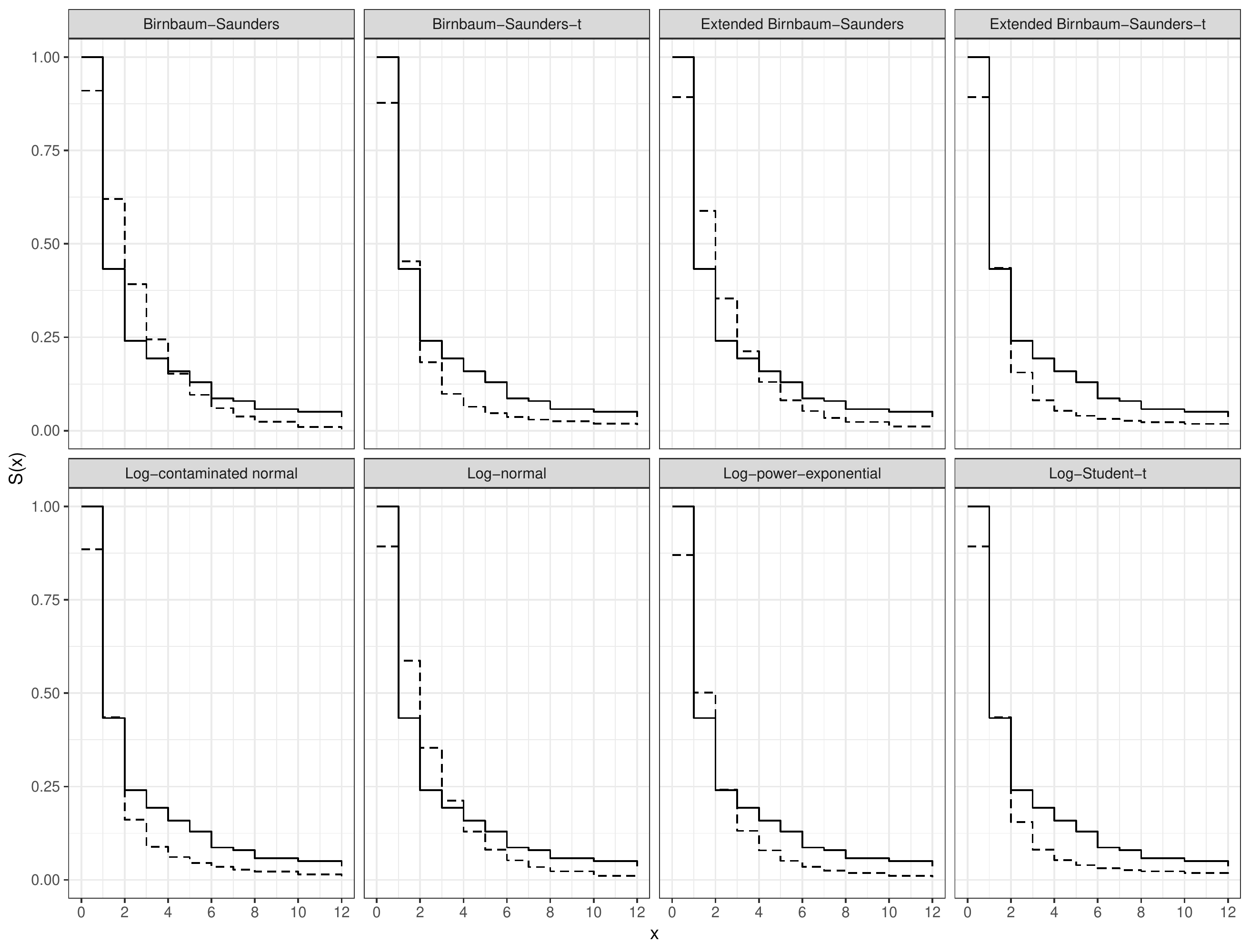}\setlength{\belowcaptionskip}{-8pt} 
		\caption{Estimation of the survival function using the KM (solid) and discrete log-symmetric distributions (dashed) with the pain relief data.}
		\label{fno-}
	\end{figure}

\end{Example}


\newpage
\noindent
\\[0,1cm]
\noindent
\section{Concluding remarks}\label{sec:06}
We have proposed a new class of distributions to deal with cases where the data are discrete, asymmetric and nonnegative. The proposed approach is a discrete version  of the family of continuous log-symmetric distributions. We have considered estimation about the model parameters based on the maximum likelihood method with censored and uncensored data. A Monte Carlo simulation study was carried out to evaluate the behavior of the maximum likelihood estimators. We have applied the proposed models to two real-world data sets. In general, the results have shown that the proposed discrete family proved to be an useful model for discrete data. As part of future research, it is of interest to discuss regression models as well as multivariate extensions. Moreover, time series models based on the proposed class may be of interest. Work on these issues is currently in progress and we hope to report some findings in future papers.
\noindent
\\[0,1cm]
\noindent

\bibliographystyle{apalike}

\begin{thebibliography}{}

\bibitem[Balakrishnan et~al., 2017]{Balakrishnan2017}
Balakrishnan, N., Saulo, H., Bourguignon, M., and Zhu, X. (2017).
\newblock On moment-type estimators for a class of log-symmetric distributions.
\newblock {\em Computational Statistics}, 32(4):1339--1355.

\bibitem[Hinkley, 1975]{hinkley:75}
Hinkley, D.~V. (1975).
\newblock On power transformations to symmetry.
\newblock {\em Biometrika}, 62:101--111.

\bibitem[Jones, 2008]{j:08}
Jones, M.~C. (2008).
\newblock On reciprocal symmetry.
\newblock {\em Journal of Statistical Planning and Inference}, 138:3039--3043.

\bibitem[Medeiros and Ferrari, 2017]{franciscosilvia2017}
Medeiros, F. M.~C. and Ferrari, S. L.~P. (2017).
\newblock Small-sample testing inference in symmetric and log-symmetric linear
regression models.
\newblock {\em Statistica Neerlandica}, 71:200--224.

\bibitem[Mittelhammer et~al., 2000]{mjm:00}
Mittelhammer, R.~C., Judge, G.~G., and Miller, D.~J. (2000).
\newblock {\em {Econometric Foundations}}.
\newblock Cambridge University Press, New York, US.

\bibitem[Moors, 1988]{moors:88}
Moors, J. J.~A. (1988).
\newblock A quantile alternative for kurtosis.
\newblock {\em The Statistician}, 37:25--32.

\bibitem[{R Core Team}, 2016]{R:15}
{R Core Team} (2016).
\newblock {\em {R: A Language and Environment for Statistical Computing}}.
\newblock R Foundation for Statistical Computing, Vienna, Austria.

\bibitem[Saulo and Le\~ao, 2017]{saulo2017log}
Saulo, H. and Le\~ao, J. (2017).
\newblock On log-symmetric duration models applied to high frequency financial
data.
\newblock {\em Economics Bulletin}, 37:1089--1097.

\bibitem[Silva et~al., 2017]{SILVAETAL2017}
Silva, J.~F., Liebano, R.~E., Corr\^ea, J.~B., Matsushita, R.~Y., and Nakano,
E.~Y. (2017).
\newblock Analysis of the time to relieving pain in patients with chronic
non-specific low back pain via {C}ox proportional hazard model.
\newblock {\em Ci\^encia e Natura}, 39:233--243.

\bibitem[Trenkler, 1995]{trenkler:95}
Trenkler, D. (1995).
\newblock A handbook of small data sets: Hand, d.j., daly, f., lunn, a.d.,
mcconway, k.j. \& ostrowski, e. (1994): Chapman \& hall, london.
\newblock {\em Computational Statistics \& Data Analysis}, 19(1):101--101.

\bibitem[Vanegas and Paula, 2016a]{vp:16a}
Vanegas, L.~H. and Paula, G.~A. (2016a).
\newblock An extension of log-symmetric regression models: {R} codes and
applications.
\newblock {\em Journal of Statistical Simulation and Computation},
86:1709--1735.

\bibitem[Vanegas and Paula, 2016b]{vanegasp:16b}
Vanegas, L.~H. and Paula, G.~A. (2016b).
\newblock {\em ssym: Fitting Semi-Parametric log-Symmetric Regression Models}.
\newblock R package version 1.5.7.

\bibitem[Ventura et~al., 2019]{venturaletal:19}
Ventura, M., Saulo, H., Leiva, V., and Monsueto, S. (2019).
\newblock Log-symmetric regression models: information criteria and application
to movie business and industry data with economic implications.
\newblock {\em Applied Stochastic Models in Business and Industry},
35(4):963--977.

\bibitem[Vila et~al., 2019]{vns:19}
Vila, R., Nakano, E.~Y., and Saulo, H. (2019).
\newblock Theoretical results on the discrete {W}eibull distribution of
{N}akagawa and {O}saki.
\newblock {\em Statistics}, 53(2):339--363.

\bibitem[Zwillinger and Kokoska, 2000]{zwko:00}
Zwillinger, D. and Kokoska, S. (2000).
\newblock {\em {Standard Probability and Statistical Tables and Formula}}.
\newblock Chapman \& Hall, Boca Raton.

\end{thebibliography}

\end{document}